  \providecommand\BibTeX{{%
    \normalfont B\kern-0.5em{\scshape i\kern-0.25em b}\kern-0.8em\TeX}}}
\definecolor{mygray}{gray}{.9}
\newtheorem{proposition}{Proposition}
\begin{document}

\title{End-to-End Cost-Effective Incentive Recommendation under Budget Constraint with Uplift Modeling}

\author{Zexu Sun}
\affiliation{%
  \institution{Gaoling School of Artificial Intelligence, Renmin University of China}
  \city{Beijing}
  \country{China}
}
\email{sunzexu21@ruc.edu.cn}

\author{ Hao Yang}
\affiliation{%
  \institution{Gaoling School of Artificial Intelligence, Renmin University of China}
  \city{Beijing}
  \country{China}
}
\email{yanghao@ruc.edu.cn}

\author{Dugang Liu}
\authornote{Corresponding Authors}
\affiliation{%
  \institution{Guangdong Laboratory of Artificial Intelligence and Digital Economy (SZ)}
  \city{Shenzhen}
  \country{China}}
\email{dugang.ldg@gmail.com}

\author{Yunpeng Weng}
\affiliation{%
  \institution{FiT, Tencent}
  \city{Shenzhen}
  \country{China}
}
\email{edwinweng@tencent.com}

\author{Xing Tang}
\authornotemark[1]
\affiliation{%
  \institution{FiT, Tencent}
  \city{Shenzhen}
  \country{China}
}
\email{shawntang@tencent.com}

\author{Xiuqiang He}
\affiliation{%
  \institution{FiT, Tencent}
  \city{Shenzhen}
  \country{China}
}
\email{xiuqianghe@tencent.com}

\renewcommand{\shortauthors}{Zexu Sun, et al.}

\begin{abstract}
In modern online platforms, incentives (\textit{e.g.}, discounts, bonus) are essential factors that enhance user engagement and increase platform revenue. Over recent years, uplift modeling has been introduced as a strategic approach to assign incentives to individual customers.
Especially in many real-world applications, online platforms can only incentivize customers with specific budget constraints.
This problem can be reformulated as the multi-choice knapsack problem (MCKP). 
The objective of this optimization is to select the optimal incentive for each customer to maximize the return on investment (ROI).
Recent works in this field frequently tackle the budget allocation problem using a two-stage approach. 
However, this solution is confronted with the following challenges:
(1) The causal inference methods often ignore the domain knowledge in online marketing, where the expected response curve of a customer should be monotonic and smooth as the incentive increases. 
(2) There is an optimality gap between the two stages, resulting in inferior sub-optimal allocation performance due to the loss of the incentive recommendation information for the uplift prediction under the limited budget constraint.
To address these challenges, we propose a novel \underline{E}nd-to-\underline{E}nd Cost-\underline{E}ffective \underline{I}ncentive \underline{R}ecommendation (E$^3$IR) model under the budget constraint. Specifically, our methods consist of two modules, i.e., the uplift prediction module and the differentiable allocation module.
In the uplift prediction module, we construct prediction heads to capture the incremental improvement between adjacent treatments with the marketing domain constraints (\textit{i.e.}, monotonic and smooth). 
We incorporate integer linear programming (ILP) as a differentiable layer input in the differentiable allocation module. 
Furthermore, we conduct extensive experiments on public and real product datasets, demonstrating that our E$^3$IR improves allocation performance compared to existing two-stage approaches.
\end{abstract}

\begin{CCSXML}
<ccs2012>
<concept>
<concept_id>10002951.10003317.10003347.10003350</concept_id>
<concept_desc>Information systems~Recommender systems</concept_desc>
<concept_significance>500</concept_significance>
</concept>
<concept>
<concept_id>10010405.10010455.10010460</concept_id>
<concept_desc>Applied computing~Economics</concept_desc>
<concept_significance>500</concept_significance>
</concept>
</ccs2012>
\end{CCSXML}

\ccsdesc[500]{Information systems~Recommender systems}
\ccsdesc[500]{Applied computing~Economics}

\keywords{End-to-End Optimization, Incentive Recommendation, Budget Constraint, Uplift Modeling}


\maketitle

\section{Introduction}
With the development of online platforms, online marketing has become increasingly essential and competitive~\cite{liu2023explicit}. 
Assigning customer discounts or bonuses is a critical strategy for promoting user conversion and increasing revenue. 
For instance, Taobao utilizes coupons to enhance user activity~\cite{li2020spending}, Booking employs promotions to improve user satisfaction~\cite{albert2022commerce}, and Meituan uses cash bonuses to stimulate user retention~\cite{wang2023multi}. 
In an online marketing scenario, recommending a more significant incentive to customers can increase the purchase probability. 
It's reasonable to hypothesize that an increased incentive is unlikely to change spending behavior significantly.
However, the incentive recommendation is often constrained by a limited budget, which means that only a portion of individuals can receive incentives. 
Therefore, the main challenge is to convert more users and generate higher revenue within the budget constraints. 
This problem is typically formulated as a budget allocation problem in online marketing~\cite{zou2020heterogeneous}.

Several recent studies have tackled this problem using a two-stage approach~\cite{zhao2019uplift,albert2022commerce,zhou2023direct}. 
In the first stage, causal inference methods estimate uplift, and in the second stage, an integer programming formulation finds the optimal allocation based on the predicted uplift. 
Zhao et al.~\cite{zhao2019unified} employ a logit response model~\cite{phillips2021pricing} to forecast treatment effects and subsequently determine the optimal allocation by utilizing root-finding to satisfy the Karush-Kuhn-Tucker (KKT) conditions.
Tu et al.~\cite{tu2021personalized} introduce several advanced estimators, including Causal Tree~\cite{darondeau1989causal}, Causal Forest~\cite{athey2019estimating}, and Meta-Learners~\cite{kunzel2019metalearners}, to estimate the heterogeneous effects. They also regard the second stage as an optimization problem.
There are also some works that utilize policy to address the problem of budget allocation~\cite{xiao2019model,zou2020heterogeneous,zhang2021bcorle,zhou2023direct}.  
Xiao et al.~\cite{xiao2019model} and Zhang et al.~\cite{zhang2021bcorle} have developed reinforcement learning solutions utilizing constrained Markov decision processes to learn an optimal policy directly.
Zhou et al.~\cite{zhou2023direct} expand on the concept of decision-focused learning to accommodate multi-treatments and devise a loss function for learning decision factors for MCKP solutions.

However, there are still some limitations to these methods. 
For the two-stage methods, 
firstly, the existing causal inference methods lack interpretability and do not conform to domain knowledge, which may result in unreliable predictions in practical scenarios. 
As shown in Figure~\ref{fig:example}, the response curve of a user should be monotonic and smooth. 
For example, customers who use coupons of varying values typically yield distinct transaction revenues, with larger coupons generally resulting in higher revenues. 
Thus, without incorporating the marketing domain knowledge, we may get a wrong uplift prediction of the user, which results in a non-optimal incentive recommendation;
Secondly, there is an optimality gap between the two stages, as their objectives are fundamentally different. 
For example, with a significantly low total budget, an effective allocation algorithm should distribute incentives solely among a small subset of users who exhibit high sensitivity to incentives. 
In such cases, a well-trained uplift model might exhibit worse marketing effectiveness if the model demonstrates higher precision across all users on average but performs inadequately for this subset of users.
For the policy-based methods~\cite{lopez2020cost}, learning complex policies solely through a model-free black-box approach, without exploiting causal information, may lead to sample inefficiency.

To solve the limitations above, in this paper, we propose an innovative \underline{E}nd-to-\underline{E}nd Cost-\underline{E}ffective \underline{I}ncentives \underline{R}ecommendation (E$^3$IR) model to address these limitations. 
Our model consists of two modules: the uplift prediction module and the differentiable allocation module. 
In particular, the uplift prediction module employs a reparameterization-based multi-head structure for response prediction, ensuring monotonicity and smoothness by constraining the output to be non-negative and sharing model parameters between incremental improvement prediction heads. 
To mitigate the optimality gap, the differentiable allocation module tackles the budget allocation problem from an end-to-end perspective using a backward pass for the Integer Linear Programming (ILP) problem. 
This module can learn both value terms and constraints of the ILP, enabling universal combinatorial expressivity. 
Additionally, we conduct extensive experiments on various datasets to evaluate the effectiveness of our E$^3$IR model.
\begin{figure}[htbp]
    \centering
    \subfigure[non-monotonic]{\includegraphics[width=0.46\linewidth]{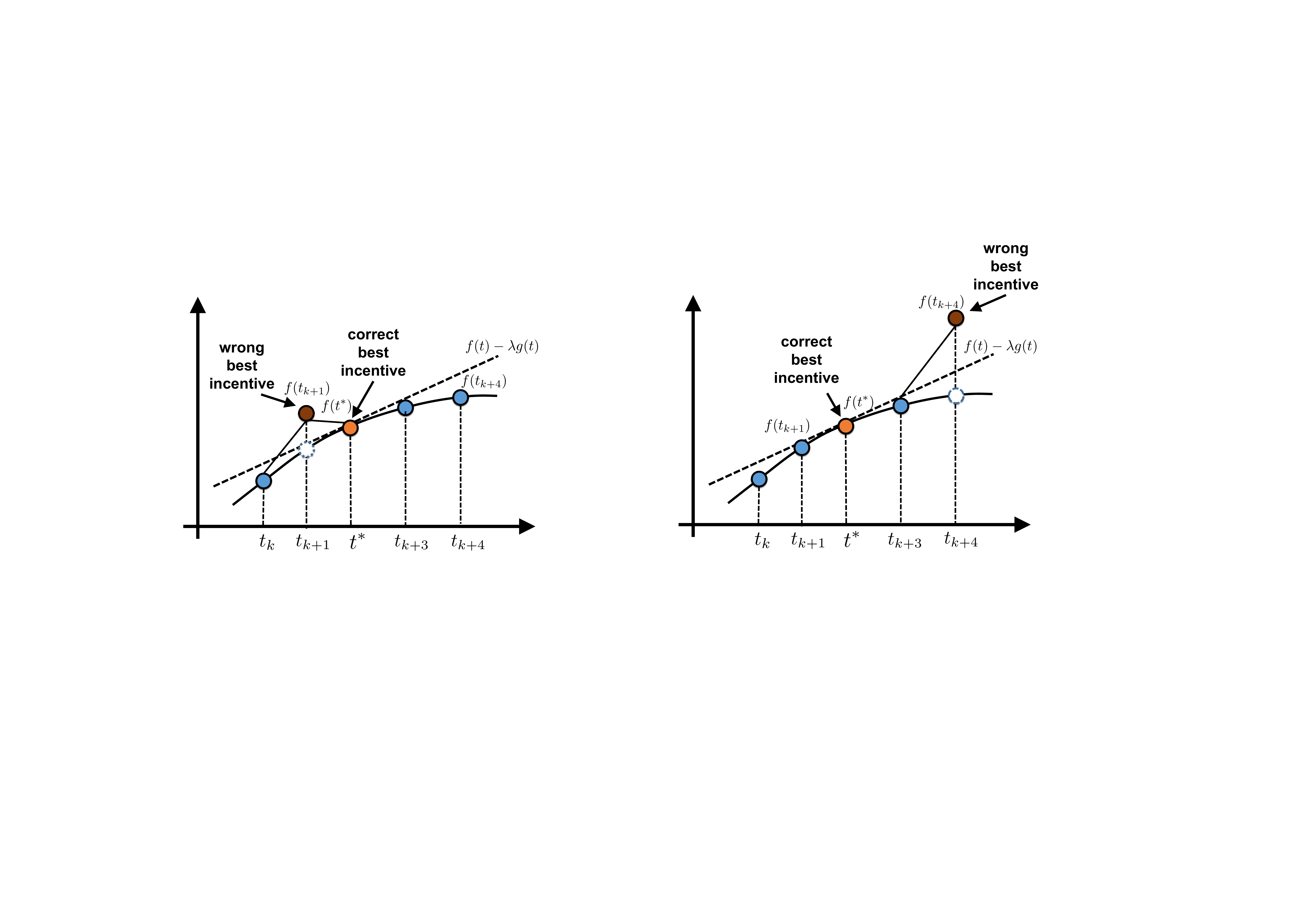}}
    \subfigure[non-smooth]{\includegraphics[width=0.48\linewidth]{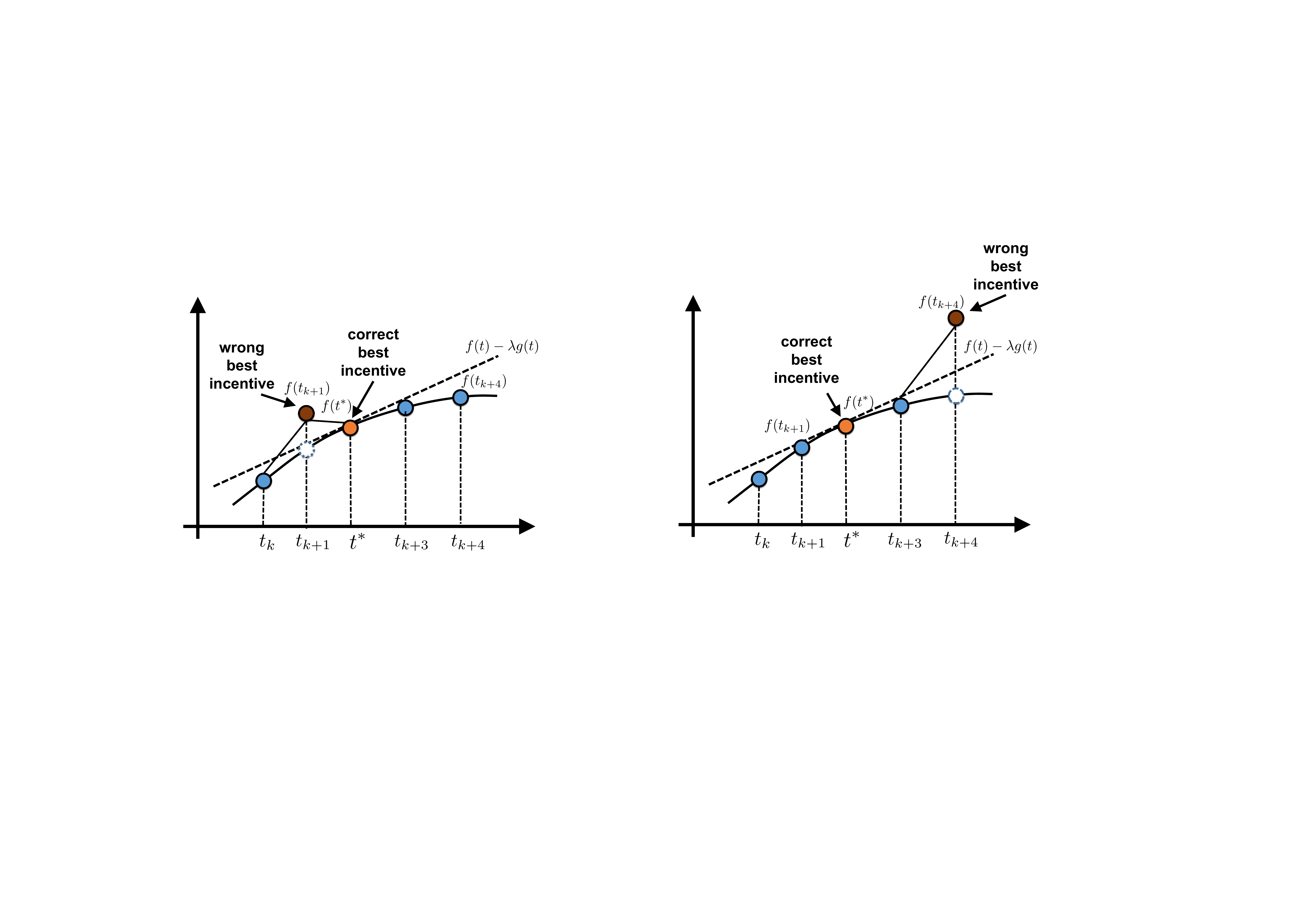}}
    \caption{An example of the common bad cases of the user response curve. $f(t)$ and $g(t)$ are a user's response function and cost function, respectively. $t^*$ is the expected best incentive level, which satisfies the $f(t^*)-\lambda g(t^*)>f(t_i)-\lambda g(t_i), \forall t_i\neq t^*$. In the two cases, we may find the wrong best incentive level $t_j$ (\textit{i.e.}, $t_{k+1}$ in (a) and $t_{k+4}$ in (b)) because of $f(t^*)-\lambda g(t^*)<f(t_j)-\lambda g(t_j), \forall t_j\neq t^*$.}
    \Description[<Figure 1. Fully described in the text.>]{<A full description of Figure 1 can be found in the third paragraph of Section 1.>}
    \label{fig:example}
\end{figure}
Our contributions can be summarized as follows:

\begin{itemize}
    \item  We propose a novel E$^3$IR model for the end-to-end optimization of the budget allocation problem.
    \item We design an uplift prediction module injected with the marketing domain knowledge, which can ensure the user response curve is monotonic and smooth. 
    \item We employ the differentiable ILP layer for the budget allocation, which can mitigate the performance gap in the two-stage methods.
    \item We conduct extensive experiments on a public dataset and a production dataset, and the results demonstrate the superiority of our method. 
\end{itemize}
\section{Related Work}
\subsection{Budget Allocation}
In recent years, several budget allocation methods have been proposed to address the issue of online marketing. Previous studies often utilize heuristic methods to determine the optimal incentive for users, considering predicted uplift~\cite{guelman2015uplift, zhao2017uplift, zhao2019uplift}. 
However, the lack of a well-defined formulation of the optimization problem may restrict the effectiveness of these methods in achieving the marketing objective.
In recent years, the most popular way of allocating budget is the two-stage methods~\cite{zhao2019unified,goldenberg2020free,ai2022lbcf,albert2022commerce}, for the first stage causal inference methods are used to predict the treatment effects. 
For the second stage, the integer programming is invoked to find the optimal allocation. 
Makhijani et al.~\cite{makhijani2019lore} introduces the marketing goal as a Min-Cost Flow network optimization problem to enhance expressiveness.
Albert et al.\cite{albert2022commerce} and Ai et al.~\cite{ai2022lbcf} employ the Multiple Choice Knapsack Problem (MCKP) to model the discrete budget allocation problem, and they propose efficient solvers based on Lagrangian duality.
Despite the effectiveness of the above methods, solutions derived from these two-stage methods may be suboptimal due to misalignment between their objectives.
There are also some works that examine the policy to solve this problem.
Du et al.~\cite{du2019improve} and Zou et al.~\cite{zou2020heterogeneous} proposed the direct learning approaches for determining the ratios between values and costs in a binary treatment setting, where treatments are first applied to users with higher scores.
Zhou et al.~\cite{zhou2023direct} proved that the proposed loss in the studies by \cite{du2019improve,zou2020heterogeneous} is unable to achieve the correct rank when it converges
The method presented in \cite{goldenberg2020free} is limited in applicability as it assumes a strict budget constraint of ROI $\geq$ 0, which lacks flexibility for numerous marketing campaigns on online platforms.
However, the proposed method relies on the assumption of diminishing marginal utility and a presumption that is frequently not strictly upheld in practical scenarios.
Our work focuses on the uplift prediction with domain knowledge, which imposes constraints on the output of user responses to be monotonic and smooth while optimizing budget allocation through an end-to-end approach.

\subsection{End-to-End Optimization}
End-to-end optimization is a crucial technique for mitigating the performance gap in the two-stage "predict + optimize" problem.
Several recent studies have investigated the integration of predictive algorithms and optimization problems. In this context, a promising approach is to utilize neural networks to differentiate the optimization layers.
Amos and Kolter~\cite{amos2017optnet} introduced a differentiable layer for Quadratic Programming (QP) optimization by differentiating the KKT conditions.
Donti et al.~\cite{donti2017task} introduced the task-based end-to-end training process for QP problems. Studies on end-to-end training of Linear Programming (LP) problems have also been explored using quadratic regularization terms~\cite{wilder2019melding}.
An interior-point approach (IntOpt)~\cite{mandi2020interior} is proposed with the homogeneous self-dual of LP problems to obtain backward gradients. 
Guler et al.~\cite{guler2022divide} propose a divide-and-conquer algorithm for solving non-convex problems in the ``predict + optimize'' framework.
Inspired by \cite{paulus2021comboptnet}, we have developed tailored predictive optimization solutions for uplift modeling in online marketing, specifically targeting the situation with budget constraints with an end-to-end optimization, which has not been extensively explored in prior works.

\section{Preliminaries}
\label{sec:pre}
In this work, we address the personalized assignment of incentives (treatments) on an online platform. 
The optimization target is to maximize the overall incremental number of customers completing a purchase. 
We can pick at most one incentive to offer each customer from a finite set of eligible incentives (see the example in Figure \ref{fig:example}). A global budget constrains the overall incremental revenue generated by the incentive.

\subsection{Cost-aware Uplift Modeling}
Let $\mathcal{D}$ be the observed dataset with $n$ samples, where each sample is represented as $(\boldsymbol{x}_i,t_i,y^c_i, y^r_i)$. 
Without loss of generality, $\boldsymbol{x}_i \in \mathcal{X}$, where $\mathcal{X} \subset \mathbb{R}^d$, is a $d$-dimensional feature vector. 
Similarly, $y^c_i$ and $y^r_i$ represent the conversion response variable and the revenue response variable, respectively. And they belong to the value spaces $\mathcal{Y}^c$ and $\mathcal{Y}^r$, which can be either binary or continuous. 
The treatment variable $t_i$ takes values in the set $\mathcal{T} = \{0, 1, \ldots, K\}$, where $K \geq 2$. For example, $\mathcal{T}$ can represent different discounts.

We formalize the problem by following the Neyman-Rubin potential outcome framework~\cite{rubin2005causal}, which enables us to express the uplift of incentive as follows. Let $y^c_i(k)$ and $y^r_i(0)$ represent the potential outcomes for user $i$ when they receive incentive $t_i = k \in \{1, \ldots, K\}$ or when they are not treated, respectively. We use $\tau^c_{i,k}$ to denote the incremental uplift caused by a specific treatment $k$. Under certain assumptions~\cite{zhang2021unified}, we can use the conditional average treatment effect (CATE) as an unbiased estimator for the uplift. CATE is defined as:
\begin{equation}
\begin{aligned}
    & \tau^c_{i,k} = \mathbb{E}\left(y^c_i(k)-y^c_i(0)\mid \boldsymbol{x}_i\right),\\
   & \tau^r_{i,k} = \mathbb{E}\left(y^r_i(k)-y^r_i(0)\mid \boldsymbol{x}_i\right).
\end{aligned}
\end{equation}
where $\tau^c_{i,k}$ represents the incremental effect on the expected conversion probability of user $i$ with assigned treatment $t=k$. $\tau^r_{i,k}$ represents the incremental effect on the expected revenue of user $i$ with assigned treatment $t=k$. Since most marketing incentives have a positive effect on the user response, thus we have $\tau^c_{i,k} \geq 0$ and $\tau^r_{i,k}\geq 0, \forall k \in \{1, \ldots K\}$. Specially, $k=0$ represents the user receives no incentive, then $\tau^c_{i,k}=0$ and $\tau^r_{i,k}=0$.

Specially, most research studies treat the cost of implementing the treatment as fixed. When considering incentives, researchers typically use the fixed coupon value as the cost of the incentive. However, this formulation fails to accurately account for the incremental profit generated by the behavioral changes induced by the incentive. For instance, while two users might exhibit the same incremental effect, one could have a higher usage level and contribute a more significant incremental profit. This aspect can only be captured by measuring the treatment's impact on cost~\cite{du2019improve}.

\subsection{Budget Allocation Problem}
\label{sec:budget}
\subsubsection{Binary Treatment Budget Allocation}
The binary treatment budget allocation problem is to assign the binary treatment to part of the individuals to maximize the overall revenue on the platform, but requires that the incremental cost does not exceed a limited budget $B$. 
Let the decision variables be the $z_i \in\{0,1\}$. Therefore, this problem can be formulated as an integer programming problem.
\begin{equation}
\begin{aligned}
&-\min \sum_i \tau^r_i z_i  \\
&\text { s.t. } \sum_i \tau^c_i z_i  \leq B \\
& z_i \in\{0,1\}, \quad i = 1, \dots, n.
\end{aligned}
\end{equation}
we take an equivalent transformation from $\max$ to $-\min$ for fitting the standard 0-1 knapsack problem,

\subsubsection{Multi-Treatment Budget Allocation}
The goal of this optimization task, which is an extension of the binary treatment budget allocation, is to select a single incentive $k^*$ for each user $i$ to maximize the sum of the selected $\tau^c_{i,k}$ values. However, the total sum of the selected $\tau^c_{i,k}$ should not exceed the budget constraint $B$.

Refer to the binary treatment budget allocation, the budget allocation in the multi-treatment scenario can be formulated as the MCKP:
\begin{equation}
\begin{aligned}
-\min \quad & \sum_{i=1}^{I}{\sum_{k=1}^{K}{\tau^r_{ik}z_{ik}}} \\
\text {s.t.} \quad & \sum_{i=1}^{I}\sum_{k=1}^{K} {\tau^c_{ik} z_{ik}} \le B \\
& \sum_{k=1}^{K}{z_{ik}} = 1, \quad i = 1, \dots, n,  \\
& z_{ik} \in \{ 0, 1 \}, \quad i = 1, \dots, n, \quad k = 1, \dots, K.
\end{aligned}\label{eq:mckp}
\end{equation}
where $z_{ik}$ is a binary assignment variable indicating whether a user $i$ is provided the $k$-th incentive or not. 
$\tau^r_{ik}$ represents the response uplift, $\tau^c_{ik}$ represents the cost uplift.

\section{Methodology}
At a high level, our E$^3$IR consists of two modules: the uplift prediction module and the differentiable allocation module. 
The first module is to predict uplifts in the user cost and response. 
Next, the second module utilizes the predicted response uplift and cost uplift to facilitate budget allocation through a differentiable optimization procedure based on an allocation objective.
The whole structure of our E$^3$IR is shown in Figure~\ref{fig:model}, and we introduce the two modules in the following.
\begin{figure*}[!t]
    \centering
    \includegraphics[width=\linewidth]{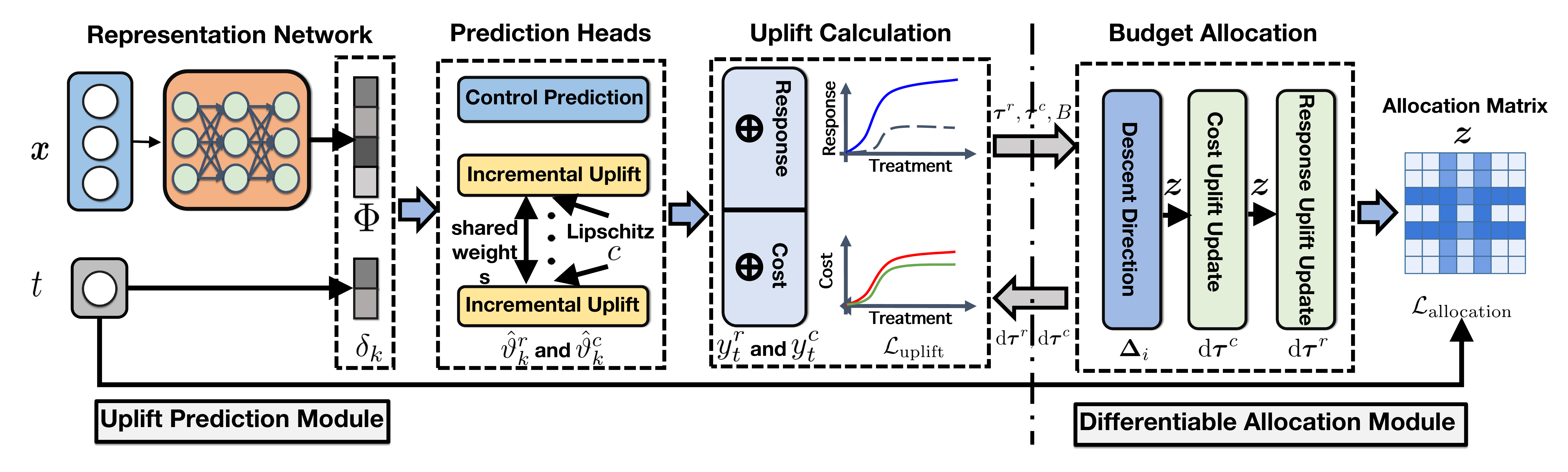}
    \caption{The overall structure of our E$^3$IR. The uplift prediction module generates the predicted uplift of user responses and corresponding costs, the differentiable allocation module generates the allocation matrix, and the two loss functions $\mathcal{L}_{\text{uplift}}$ and $\mathcal{L}_{\text{allocation}}$ are jointly optimized in Eq.~\eqref{eq:finalloss}.}
    \Description[<Figure 2. Fully described in the text.>]{<A full description of Figure 2 can be found in Section 4.>}
    \label{fig:model}
\end{figure*}

\subsection{Uplift Prediction Module}

In principle, uplift prediction for both $\tau^c_{i,k}$ and $\tau^r_{i,k}$ are achieved using uplift modeling and data from randomized controlled trials. 
To achieve the aim of this module, the previous two-stage methods use any off-the-set uplift modeling methods, such as CFRNet~\cite{shalit2017estimating} and DragonNet~\cite{shi2019adapting}, EUEN~\cite{ke2021addressing} etc.
Different from them, we inject the marketing domain knowledge (\textit{i.e.}, monotonicity, and smoothness) to design a new uplift prediction structure. 
Specifically, we formulate two constraints on the uplift prediction, \textit{i.e.}, monotonic constraint and smooth constraint.

\subsubsection{Monotonic Constraint}
Obviously, the incentive is a direct discount on the order amount in our problem setting and does not include personalized creative content. Therefore, the higher the amount, the more attractive it is for user conversion. 
However, without the unique design for the model structure, we cannot guarantee the response curve for a user $i$ is monotonic.
To ensure monotonicity, we propose a multi-treatment effect estimation model with a monotonic constraint. We implicitly model the incremental effect between two adjacent treatments and obtain the predicted uplifts through accumulation. 

As described in Section \ref{sec:pre}, we use $\left(\boldsymbol{x}_i, t_i, y_i^c, y_i^r\right)$ to represent a user, for the online platform, $x_i$ includes the statistical characteristics of the user's historical behavior, such as the number of transaction orders and the average amount of orders in the past $n$ days, and user identity descriptions, such as age and membership status, consumption preference level for each business line, etc. 
Without loss of generality, we omit the subscript of the variables.

We adapt the commonly used share bottom structure. The output of the shared bottom $\boldsymbol{L}$ is defined as $\Phi=\boldsymbol{L}(x)$, where $\Phi$ denotes the shared user representation learned from the shared bottom. It enables the efficient sharing of information across multi-treatments.
As shown in Figure \ref{fig:model}, we use $\hat{\vartheta}_i^c$ and $\hat{\vartheta}_i^r$ to represent the incremental effect of a user between two adjacent treatments. 
Thus, for the response prediction, to better distinguish the output of each head, we add the treatment embedding as part of the input rather than only separate the prediction.
Then, the input of the prediction head includes two parts: the feature representation $\Phi$ and the treatment information $\delta_i^t$. Especially, $\delta_i^t$ represents the embedding of $k$-th treatment.
Then we have:
\begin{equation}
\begin{aligned}
\hat{\vartheta}^r_i&=h^r\left(\left[\Phi ; \delta_i\right]\right), \forall k = 1,\cdots, K,\\
\hat{\vartheta}^c_i&=h^c\left(\left[\Phi ; \delta_i\right]\right), \forall k = 1,\cdots, K.
\end{aligned}
\end{equation}
where $[\cdot ; \cdot]$ denotes the concatenation of two vectors, the functions $h_i^r(\cdot)$ and $h_i^c(\cdot)$ is the incremental effect prediction head for the response and the cost, respectively.

Finally, the response prediction of a user for the $k$-th treatment can be formulated as:
\begin{equation}
\hat{y}^r_t= \begin{cases}\hat{y}^r_0, & \text { if } t=0 \\ \hat{y}^r_0+\sum_{k=1}^t \hat{\vartheta}^r_i, & \text { if } t\geq 1 .\end{cases}
\end{equation}
Following the above equation, we can also predict the cost similarly.

Then, to implement the monotonic constraint, we only need to make the incremental effect $\hat{\vartheta}^c_i$ and $\hat{\vartheta}^r_i$ greater than 0. 
It is easily achieved by squaring the raw output of the last head layer, or we can turn the output of the last head layer into an exponential form. The loss function of the proposed model is defined as follows:
\begin{equation}
\mathcal{L}_{\text{predict}}=\frac{1}{N} \sum_{(x, t, y^c,y^r) \in \mathcal{D}}\left(\mathcal{L}_c(y^c,\hat{y}^c) + \mathcal{L}_r(y^r,\hat{y}^r)\right).
\label{eq:uplift}
\end{equation}
where $N$ is the number of samples in the dataset $\mathcal{D}$, $\mathcal{L}_c$ and $\mathcal{L}_r$ are the loss functions to predict the cost and the response, which can be either cross-entropy or mean square error. $\hat{y}_c$ and $\hat{y}^r$ are the predicted cost and response for the corresponding treatment $t$.

\subsubsection{Smooth Constraint}

As illustrated in Figure~\ref{fig:example}(b), smoothness means that the model's output does not change much as the incentive level varies.
To achieve the objective, inspired by the Siamese neural network~\cite{koch2015siamese}, which is often used in contrastive learning, we share the parameter weights between different response prediction heads so as to the cost prediction heads.

Moreover, to ensure the uplift prediction can accurately reflect the influence of different treatments, we leverage Lipschitz regularization, where $h$ is Lipschitz continuous if there exists a constant $c \geq 0$ such that:
\begin{equation}
\left|h\left(v_i\right)-h\left(v_j\right)\right| \leq c\left\|v_i-v_j\right\|_2    
\end{equation}
where $v_i=\left[\Phi ; \delta_i^t\right]$ is the concatenation of the user feature embedding and the treatment embedding, $h$ can be either $h_r$ or $h_c$, and intuitively, the change of outcome is bounded by constant $c$ for smoothness. It is evident that if the neural network $f_{\Omega}$ is $c$-Lipschitz on the $v$, it is also $c$-Lipschitz on the treatment embedding $\delta_i^t$.

The Lipschitz Regularization~\cite{oberman2018lipschitz} solely depends on the weight matrices of each network layer with estimated per-layer Lipschitz upper bound $c_j$, the loss for regularizing the differences between treatment effects and outcomes can be written as:
\begin{equation}
\mathcal{L}_{\text {Lip }}=\prod_{j=1}^l \operatorname{softplus}\left(c_j \right),
\end{equation}
where $\operatorname{softplus}\left(c_j\right)=\ln \left(1+e^{c_j}\right)$ is a reparameterization strategy to avoid invalid negative estimation on Lipschitz constant $c_j$ and $l$ is the number of network layers.

Then, we can get the loss function of the uplift prediction module:
\begin{equation}
    \mathcal{L}_{\text{uplift}}=\mathcal{L}_{\text{prediction}} + \alpha \mathcal{L}_{\text{Lip}}.
\end{equation}
where $\alpha$ is the hyperparameter to control the trade-off.

\subsection{Differentiable Allocation Module}
After we have got the predicted response uplift and the cost uplift, we aim to build a differentiable incentive recommendation layer for the end-to-end optimization, which can reduce the performance gap between the uplift prediction and the budget allocation. 
Especially motivated by the ``predict + optimize'' combination optimization problem~\cite{paulus2021comboptnet}, we incorporate an ILP as a differentiable module in the model structure that inputs both constraint and objective. 

For ease of understanding, we omit the sum and the subscript in Section \ref{sec:budget} and use the vector form (bold form) of the variables. 
For the differentiability problem, we can reformulate the gradient calculation as a task of determining the descent direction.
We need to encounter the problem that the suggested gradient update $\boldsymbol{z}-\mathrm{d} \boldsymbol{z}$ to the optimal solution $\boldsymbol{z}$ is often unattainable, meaning that $\boldsymbol{z}-\mathrm{d} \boldsymbol{z}$ does not represent a feasible integer point.
\subsubsection{Descent direction} 
During the backward pass, the gradients of the subsequent layers are obtained from the ILP solver. We need to determine the change direction for the cost and response uplift. Specifically, we want the solution of the updated ILP to move in the direction opposite to the incoming gradient, also known as the direction of descent.

Given a loss function denoted by $\mathcal{L}$, let $\boldsymbol{\tau}^c$, ${B}$, $\boldsymbol{\tau}^r$, and the incoming gradient $\mathrm{d} \boldsymbol{z}=\partial \mathcal{L} / \partial \boldsymbol{z}$ at the point $\boldsymbol{z}(\boldsymbol{\tau}^c, {B}, \boldsymbol{\tau}^r)$ be provided.
To achieve end-to-end optimization, we must return gradients for $\partial \mathcal{L} / \partial \boldsymbol{\tau}^c$, $\partial \mathcal{L} / \partial {B}$, and $\partial \mathcal{L} / \partial \boldsymbol{\tau}^r$.
Our objective is to determine the directions $\mathrm{d} \boldsymbol{\tau}^c$, $\mathrm{d} B$, and $\mathrm{d} \boldsymbol{\tau}^r$ that result in the most significant decrease in distance between the updated solution $\boldsymbol{z}(\boldsymbol{\tau}^c-\mathrm{d} \boldsymbol{\tau}^c, B-\mathrm{d} B, \boldsymbol{\tau}^r-\mathrm{d} \boldsymbol{\tau}^r)$ and the target $\boldsymbol{z}-\mathrm{d} \boldsymbol{z}$.
If the mapping $\boldsymbol{z}$ is differentiable, it results in the correct gradients, such as $\partial \mathcal{L} / \partial \boldsymbol{\tau}^c=\partial \mathcal{L} / \partial \boldsymbol{z} \cdot \partial \boldsymbol{z} / \partial \boldsymbol{\tau}^c$ (similarly for $B$ and $\boldsymbol{\tau}^c$). Then we have the following proposition:
\begin{proposition}
Let $y:\mathbb{R}^{\ell} \rightarrow \mathbb{R}^n$ be a differentiable function at $x \in \mathbb{R}^{\ell}$. Let $L:\mathbb{R}^n \rightarrow \mathbb{R}$ be a differentiable function at $y = y(x) \in \mathbb{R}^n$. Denote $dy = \frac{\partial L}{\partial y}$ at $y$. Then the distance between $y(x)$ and $y - dy$ is minimized along the direction $\frac{\partial L}{\partial x}$, where $\frac{\partial L}{\partial x}$ is the derivative of $L(y(x))$ at $x$.
\end{proposition}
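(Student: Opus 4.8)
The plan is to reduce the statement to a first-order Taylor expansion of $y$ at $x$ together with the Cauchy--Schwarz inequality, in direct analogy with the textbook fact that the (negative) gradient is the direction of steepest descent. Write $J = \partial y/\partial x \in \mathbb{R}^{n\times\ell}$ for the Jacobian of $y$ at $x$, so that by the chain rule $\partial L/\partial x = J^{\top} dy$, where $dy = \partial L/\partial y \in \mathbb{R}^{n}$.

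First I would make precise the sense in which the distance is ``minimized along a direction'', since $y(x)$ and $y - dy$ themselves are fixed points whose distance is the constant $\|dy\|_2$. Fix a unit vector $u\in\mathbb{R}^{\ell}$ and a step size $t>0$, perturb $x$ to $x - tu$ (matching the $\boldsymbol{z}(\boldsymbol{\tau}^c-\mathrm{d}\boldsymbol{\tau}^c,\dots)$ convention of the previous subsection), and study the squared distance from the perturbed solution to the target,
\begin{equation}
D(u,t) = \big\| y(x - tu) - (y - dy) \big\|_2^2 .
\end{equation}
Because $D(u,0) = \|dy\|_2^2$ does not depend on $u$, the quantity that discriminates between directions is the one-sided derivative $\tfrac{d}{dt}D(u,t)\big|_{t=0^+}$, and the assertion to prove is that this is most negative exactly when $u$ is aligned with $\partial L/\partial x$.

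Next, using differentiability of $y$ at $x$, expand $y(x - tu) = y(x) - t\,Ju + o(t)$, so that
\begin{equation}
D(u,t) = \big\| dy - t\,Ju + o(t)\big\|_2^2 = \|dy\|_2^2 - 2t\,\langle dy, Ju\rangle + o(t),
\end{equation}
whence $\tfrac{d}{dt}D(u,t)\big|_{t=0^+} = -2\langle dy, Ju\rangle = -2\langle J^{\top} dy, u\rangle = -2\langle \partial L/\partial x,\, u\rangle$. By Cauchy--Schwarz, over all unit $u$ the functional $u\mapsto\langle \partial L/\partial x,\,u\rangle$ is maximized precisely at $u = (\partial L/\partial x)/\|\partial L/\partial x\|_2$ when $\partial L/\partial x \neq 0$ (if $\partial L/\partial x = 0$ every direction is first-order neutral and there is nothing to prove), so $\tfrac{d}{dt}D$ is minimized along $u\propto\partial L/\partial x$. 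Since $r\mapsto r^2$ is increasing on $[0,\infty)$, the same direction minimizes the distance itself, which is the claim; equivalently, the first-order-optimal feasible move of $\boldsymbol{z}(x)$ toward $\boldsymbol{z}-\mathrm{d}\boldsymbol{z}$ is $x\mapsto x - \partial L/\partial x$, i.e. returning $\partial L/\partial x$ as the backward gradient.

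The argument is essentially routine; I expect the only real care to be in the first step --- fixing the infinitesimal sense in which the distance is ``minimized along a direction'', so that the $o(t)$ remainder and the quadratic term $t^2\|Ju\|_2^2$ are genuinely negligible against the first-order term --- and in dispatching the degenerate case $\partial L/\partial x = 0$ separately. I do not anticipate any deeper obstacle.
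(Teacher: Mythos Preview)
Your proposal is correct and takes essentially the same approach as the paper: both define the distance $\varphi(\xi)=\|y(x-\xi)-(y-dy)\|$ (you use the squared norm, the paper uses the norm itself), compute its first-order variation at the origin to obtain $-\langle \partial L/\partial x,\cdot\rangle$ up to a positive scalar via the chain rule, and invoke the steepest-descent/Cauchy--Schwarz characterization. The only cosmetic differences are that the paper differentiates the un-squared norm directly (treating the degenerate case $dy=0$ rather than $\partial L/\partial x=0$), while you parametrize by a unit direction $u$ and step size $t$; your handling of the remainder and the degenerate case is, if anything, slightly more explicit.
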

\begin{proof}
For any $\xi \in \mathbb{R}^{\ell}$, let $\varphi(\xi)$ represent the distance between $\boldsymbol{y}(\boldsymbol{x}-\xi)$ and the target $\boldsymbol{y}(\boldsymbol{x})-\mathrm{d} \boldsymbol{y}$. It can be formulated as $$
\varphi(\xi)=\|\boldsymbol{y}(\boldsymbol{x}-\xi)-\boldsymbol{y}(\boldsymbol{x})+\mathrm{d} \boldsymbol{y}\|.
$$

If $\mathrm{d} \boldsymbol{y}=0$, there is nothing to prove as $\boldsymbol{y}(x)=\boldsymbol{y}-\mathrm{d} \boldsymbol{y}$ and no improvements can be made. Otherwise, $\varphi$ is a positive and differentiable function in the neighborhood of zero. The Frenet derivative of $\varphi$ is given by: $$
\varphi^{\prime}(\xi)=\frac{-[\boldsymbol{y}(\boldsymbol{x}-\xi)-\boldsymbol{y}(\boldsymbol{x})+\mathrm{d} \boldsymbol{y}] \cdot \frac{\partial \boldsymbol{y}}{\partial \boldsymbol{x}}(\boldsymbol{x}-\xi)}{\|\boldsymbol{y}(\boldsymbol{x}-\xi)-\boldsymbol{y}(\boldsymbol{x})+\mathrm{d} \boldsymbol{y}\|},
$$ 
thus,
$$
\varphi^{\prime}(0)=-\frac{1}{\|\mathrm{~d} \boldsymbol{y}\|} \frac{\partial L}{\partial \boldsymbol{y}} \cdot \frac{\partial \boldsymbol{y}}{\partial \boldsymbol{x}}=-\frac{1}{\|\mathrm{~d} \boldsymbol{y}\|} \frac{\partial L}{\partial \boldsymbol{x}},
$$ 
where the last equality follows from the application of the chain rule, therefore, the steepest descent direction coincides with the derivative $\partial L / \partial \boldsymbol{x}$, considering that $\|\mathrm{d} \boldsymbol{y}\|$ is a scalar.
\end{proof}

However, every ILP solution $z\left(\tau^c-d \tau^c, B-d B, \tau^r-d \tau^r\right)$ is confined to integer points, and its ability to approach the point $z-d z$ is limited unless $d z$ itself is also an integer point. To accomplish this, we can decompose the vector $d z$ as follows:
\begin{equation}
\mathrm{d} \boldsymbol{z}=\sum_{k=1}^n \lambda_i \Delta_i.   \label{eq:decompose}
\end{equation}
where $\Delta_i \in\{-1,0,1\}^n$ are integer points and $\lambda_i \geq 0$ are scalars. The specific choice of basis $\Delta_i$ will be discussed separately. For now, it is sufficient to understand that each point $z_i^{\prime}=z-\Delta_i$ is an integer point neighboring $y$ that indicates the direction of $-d z$. Then, we address separate problems by replacing $d z$ with the integer updates $\Delta_i$.

To maintain the linearity of the standard gradient mapping, our objective is to combine the gradients from the subproblems to create the final gradient linearly. It is important to note that in the budget allocation problem, $B$ is a constant. As a result, $\mathrm{d} B=0$, and we do not update $B$ in the subsequent description.

\subsubsection{Cost uplift update} 
To obtain a significant update for an achievable change $\Delta_i$, we compute the gradient of a piecewise affine local mismatch function $P_{\boldsymbol{z}_i^{\prime}}$. 
The definition of $P_{\boldsymbol{z}_i^{\prime}}$ is derived from a geometric understanding of the underlying structure. In doing so, we depend on the Euclidean distance between a point and a hyperplane. 
Indeed, for any point $\boldsymbol{z}$ and a given hyperplane, parametrized by vector $\boldsymbol{\tau}^c$ and scalar $B$ as $\boldsymbol{z} \mapsto \boldsymbol{\tau}^c \cdot \boldsymbol{z}-B$, we have:
\begin{equation}
\operatorname{dist}(\boldsymbol{\tau}^c, B ; \boldsymbol{z})=|\boldsymbol{\tau}^c \cdot \boldsymbol{z}-B| /\|\boldsymbol{\tau}^c\| .  
\end{equation}
To update the constraints, we define that if $\boldsymbol{\tau}^c \boldsymbol{z}_i^{\prime} \leq \boldsymbol{b}$, then $\boldsymbol{z}_i^{\prime}$ is feasible. Then $P_{\boldsymbol{z}_i^{\prime}}(\boldsymbol{\tau}^c, B)$ is formulated as:
\begin{equation}
P_{\boldsymbol{z}_i^{\prime}}(\boldsymbol{\tau}^c, B)=\left\{\begin{array}{c}\min _j \operatorname{dist}\left(\boldsymbol{\tau}^c_j, B ; \boldsymbol{z}\right),  \text { if } \boldsymbol{z}_i^{\prime} \text { is feasible and } \boldsymbol{z}_i^{\prime} \neq \boldsymbol{z} \\ \sum_j \llbracket \boldsymbol{\tau}^c_j \cdot \boldsymbol{z}_i^{\prime}>B \rrbracket \operatorname{dist}\left(\boldsymbol{\tau}^c_j, B ; \boldsymbol{z}_i^{\prime}\right), \text { if }\boldsymbol{z}_i^{\prime} \text{ is infeasible}\\
0, \text { if }\boldsymbol{z}_i^{\prime}=\boldsymbol{z}\text { or }\boldsymbol{z}_i^{\prime} \notin \mathcal{Z}.
\end{array}\right.    
\end{equation}
$\mathcal{Z}$ is the value space of $\boldsymbol{z}$. Imposing linearity and using Eq.~\eqref{eq:decompose}, we define the gradient $\mathrm{d} \boldsymbol{\tau}^c$ as:
\begin{equation}
\mathrm{d} \boldsymbol{\tau}^c=\sum_{k=1}^n \lambda_i \frac{\partial P_{\boldsymbol{z}_i^{\prime}}}{\partial \boldsymbol{\tau}^c}(\boldsymbol{\tau}^c, B).\label{eq:gradient}
\end{equation}


Note that the mapping $\mathrm{d} \boldsymbol{z} \mapsto \mathrm{d} \boldsymbol{\tau}^c$ is homogeneous. 
It is because the whole situation is rescaled to one case (choice of basis) where the gradient is computed and then rescaled back (scalars $\lambda_i$). The most natural scale agrees when $\boldsymbol{z}_i^{\prime}$ are the closest integer neighbors. 
This is due to the scale, which means that the entire situation is rescaled to one case (choice of basis), where the gradient is computed and subsequently rescaled. The most natural scale occurs when $\boldsymbol{z}_i^{\prime}$ are the nearest integer neighbors. This ensures the situation does not collapse into a trivial solution (zero gradients) and prevents interference with distant values of $\boldsymbol{z}$. The selection of this basis serves as a hyperparameter. In our case, we construct a valid basis explicitly and do not require optimization of any additional hyperparameters.

\subsubsection{Response uplift update} 
Ignoring the distinction between feasible and infeasible $\boldsymbol{z}_i^{\prime}$, the problem of updating the cost has been addressed in several prior studies. We employ a straightforward approach that defines the mismatch function in a way that the resulting update prioritizes $\boldsymbol{z}_i^{\prime}$ over $\boldsymbol{z}$ in the updated optimization problem:
\begin{equation}
P_{\boldsymbol{z}_i^{\prime}}(\boldsymbol{\tau}^r)= \begin{cases}\boldsymbol{\tau}^r \cdot\left(\boldsymbol{z}_i^{\prime}-\boldsymbol{z}\right) & \text { if } \boldsymbol{z}_i^{\prime} \text { is feasible}, \\ 0 & \text { if } \boldsymbol{z}_i^{\prime} \text { is infeasible or } \boldsymbol{z}_i^{\prime} \notin \mathcal{Z} .\end{cases}
\end{equation}
The gradient $\mathrm{d} \boldsymbol{\tau}^r$ is then composed similarly as in Eq.~\eqref{eq:gradient}.
\subsubsection{The choice of the basis} 
Let $i_1, \ldots, i_n$ be the indices of the coordinates in the absolute values of $\mathrm{d} \boldsymbol{z}$ in decreasing order, \textit{i.e.},
\begin{equation}
\left|\mathrm{d} \boldsymbol{z}_{i_1}\right| \geq\left|\mathrm{d} \boldsymbol{z}_{i_2}\right| \geq \cdots \geq\left|\mathrm{d} \boldsymbol{z}_{i_n}\right|    ,
\end{equation}
and set
\begin{equation}
\Delta_i=\sum_{j=1}^k \operatorname{sign}\left(\mathrm{d} \boldsymbol{z}_{i_j}\right) \boldsymbol{u}_{i_j}, \label{eq:delta}
\end{equation}
where $e_i$ represents the $k$-th canonical vector. Consequently, $\Delta_i$ represents the (signed) indicator vector of the initial $i$ dominant directions.

Let $\ell$ be the largest index such that $|\mathrm{d} \boldsymbol{z}_{\ell}|>0$. Consequently, the initial $\ell$ vectors $\Delta_i$ are linearly independent and serve as a basis for their respective subspace. Thus, there are scalars $\lambda_i$ that satisfy decomposition in Eq.~\eqref{eq:decompose}.

\begin{proposition}
If $\lambda_j=\left|\mathrm{d} \boldsymbol{z}_{i_j}\right|-\left|\mathrm{d} \boldsymbol{z}_{i_{j+1}}\right|$ for $j=$ $1, \ldots, n-1$ and $\lambda_n=\left|\mathrm{d} \boldsymbol{z}_{i_n}\right|$, then Eq.~\eqref{eq:decompose} holds with $\Delta_i$'s as in Eq.~\eqref{eq:delta}.  
\end{proposition}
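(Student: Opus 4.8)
The plan is to verify the decomposition in Eq.~\eqref{eq:decompose} by direct substitution, exchanging the order of the two finite sums and collapsing a telescoping sum of the $\lambda_j$'s. Concretely, I would start from the right‑hand side $\sum_{k=1}^n \lambda_k \Delta_k$, plug in $\Delta_k=\sum_{j=1}^k \operatorname{sign}(\mathrm{d}\boldsymbol{z}_{i_j})\,\boldsymbol{u}_{i_j}$ from Eq.~\eqref{eq:delta}, and swap the order of summation to obtain $\sum_{j=1}^n \operatorname{sign}(\mathrm{d}\boldsymbol{z}_{i_j})\,\boldsymbol{u}_{i_j}\bigl(\sum_{k=j}^n \lambda_k\bigr)$. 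The whole argument then reduces to evaluating the inner partial sums $\sum_{k=j}^n \lambda_k$.

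For that step I would substitute the stated values $\lambda_k=|\mathrm{d}\boldsymbol{z}_{i_k}|-|\mathrm{d}\boldsymbol{z}_{i_{k+1}}|$ for $k<n$ and $\lambda_n=|\mathrm{d}\boldsymbol{z}_{i_n}|$, so that $\sum_{k=j}^n \lambda_k=\sum_{k=j}^{n-1}\bigl(|\mathrm{d}\boldsymbol{z}_{i_k}|-|\mathrm{d}\boldsymbol{z}_{i_{k+1}}|\bigr)+|\mathrm{d}\boldsymbol{z}_{i_n}|$, which telescopes to $|\mathrm{d}\boldsymbol{z}_{i_j}|$. Substituting back yields $\sum_{k=1}^n \lambda_k \Delta_k=\sum_{j=1}^n \operatorname{sign}(\mathrm{d}\boldsymbol{z}_{i_j})\,|\mathrm{d}\boldsymbol{z}_{i_j}|\,\boldsymbol{u}_{i_j}=\sum_{j=1}^n \mathrm{d}\boldsymbol{z}_{i_j}\,\boldsymbol{u}_{i_j}$, using $\operatorname{sign}(a)\,|a|=a$. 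Since $i_1,\dots,i_n$ is a permutation of $1,\dots,n$, the last sum ranges over all coordinates exactly once and therefore equals $\mathrm{d}\boldsymbol{z}$, which is precisely Eq.~\eqref{eq:decompose}.

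I would also add one sentence noting that the coefficients produced are admissible: because the indices were chosen so that $|\mathrm{d}\boldsymbol{z}_{i_1}|\ge|\mathrm{d}\boldsymbol{z}_{i_2}|\ge\cdots\ge|\mathrm{d}\boldsymbol{z}_{i_n}|\ge 0$, each $\lambda_j=|\mathrm{d}\boldsymbol{z}_{i_j}|-|\mathrm{d}\boldsymbol{z}_{i_{j+1}}|\ge 0$ and $\lambda_n\ge 0$, so this construction simultaneously establishes the nonnegativity requirement stated after Eq.~\eqref{eq:decompose}, and (together with the linear independence of the first $\ell$ vectors $\Delta_i$ already observed) shows the $\lambda_j$'s are the unique such coefficients.

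\textbf{Expected main obstacle.} There is no deep difficulty here; the argument is a one-line Abel/telescoping identity. The only things to watch are the bookkeeping of the permutation $i_1,\dots,i_n$ when interchanging the two sums, and the degenerate case $\ell<n$ in which some entries of $\mathrm{d}\boldsymbol{z}$ vanish — but there $\operatorname{sign}(\mathrm{d}\boldsymbol{z}_{i_j})=0$ annihilates the corresponding terms on both sides, so the identity continues to hold verbatim.
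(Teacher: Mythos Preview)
Your argument is correct and in fact cleaner than the paper's own proof. You expand $\sum_{k}\lambda_k\Delta_k$, swap the two finite sums, and telescope $\sum_{k\ge j}\lambda_k=|\mathrm{d}\boldsymbol{z}_{i_j}|$; this is a direct Abel-summation computation. The paper instead proves, by backward induction on $\ell=n,n-1,\dots,1$, the auxiliary identity
\[
\sum_{j=\ell}^n w_j\,\boldsymbol{u}_{i_j}
=\sum_{j=\ell}^n \lambda_j\Delta_j-|w_\ell|\sum_{j=1}^{\ell-1}\operatorname{sign}(w_j)\,\boldsymbol{u}_{i_j},
\qquad w_j:=\mathrm{d}\boldsymbol{z}_{i_j},
\]
and recovers Eq.~\eqref{eq:decompose} from the case $\ell=1$. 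The two routes are logically equivalent: your sum-swap plus telescoping is exactly what the inductive step is doing one index at a time. Your version exposes the telescoping structure more transparently and is shorter; the paper's version produces a family of partial-sum identities as a byproduct, though those are not used elsewhere.

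One minor remark: your closing sentence about uniqueness of the $\lambda_j$'s overreaches slightly in the degenerate case $\ell<n$. With $\operatorname{sign}(0)=0$ one has $\Delta_{\ell}=\Delta_{\ell+1}=\cdots=\Delta_n$, so the full family $\Delta_1,\dots,\Delta_n$ is not linearly independent and the coefficients in Eq.~\eqref{eq:decompose} are not uniquely determined (though the particular $\lambda_j$'s in the proposition still work, since $\lambda_{\ell+1}=\cdots=\lambda_n=0$). This does not affect the proposition itself, which only asserts that the stated $\lambda_j$'s satisfy Eq.~\eqref{eq:decompose}; you can simply drop the uniqueness claim.
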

\begin{proof}
We prove that
\begin{equation}
\sum_{j=\ell}^n w_j \boldsymbol{u}_{i_j}=\sum_{j=\ell}^n \lambda_j \Delta_j-\left|w_{\ell}\right| \sum_{j=1}^{\ell-1} \operatorname{sign}\left(w_j\right) \boldsymbol{u}_{i_j}, \label{eq:proof1}
\end{equation}
For each $\ell = 1, \ldots, n$, with the abbreviation $w_j = \mathrm{d} \boldsymbol{z}_{i_j}$, the desired equality of Eq.~\eqref{eq:decompose} can be derived from the special case where $\ell = 1$ in Eq.~\eqref{eq:proof1}. 

Next, we proceed with the proof of induction. To start, we will demonstrate the validity of Eq.~\eqref{eq:proof1} for $\ell = n$. By referring to the definition of $\Delta_n$ in Eq.~\eqref{eq:delta}, we can observe:
$$
\begin{aligned}
\lambda_n & \Delta_n-\left|w_n\right| \sum_{j=1}^{n-1} \operatorname{sign}\left(w_j\right) \boldsymbol{u}_{i_j} \\
& =\left|w_n\right| \sum_{j=1}^n \operatorname{sign}\left(w_j\right) \boldsymbol{u}_{i_j}-\left|w_n\right| \sum_{j=1}^{n-1} \operatorname{sign}\left(w_j\right) \boldsymbol{u}_{i_j} \\
& =w_n \boldsymbol{u}_{i_n}
\end{aligned}
$$

Assuming that Eq.~\eqref{eq:proof1} holds for $\ell+1 \geq 2$, then we demonstrate that it also holds for $\ell$:
$$
\begin{aligned}
\sum_{j=\ell}^n & \lambda_j \Delta_j-\left|w_{\ell}\right| \sum_{j=1}^{\ell-1} \operatorname{sign}\left(w_j\right) \boldsymbol{u}_{i_j} \\
= & \sum_{j=\ell+1}^n \lambda_j \Delta_j-\left|w_{\ell+1}\right| \sum_{j=1}^{\ell} \operatorname{sign}\left(w_j\right) \boldsymbol{u}_{i_j}+\lambda_{\ell} \Delta_{\ell} \\
& +\left|w_{\ell+1}\right| \sum_{j=1}^{\ell} \operatorname{sign}\left(w_j\right) \boldsymbol{u}_{i_j}-\left|w_{\ell}\right| \sum_{j=1}^{\ell-1} \operatorname{sign}\left(w_j\right) \boldsymbol{u}_{i_j} \\
= & \sum_{j=\ell+1}^n w_j \boldsymbol{u}_{i_j}+\left(\left|w_{\ell}\right|-\left|w_{\ell+1}\right|\right) \sum_{j=1}^{\ell} \operatorname{sign}\left(w_j\right) \boldsymbol{u}_{i_j} \\
& +\left|w_{\ell+1}\right| \sum_{j=1}^{\ell} \operatorname{sign}\left(w_j\right) \boldsymbol{u}_{i_j}-\left|w_{\ell}\right| \sum_{j=1}^{\ell-1} \operatorname{sign}\left(w_j\right) \boldsymbol{u}_{i_j} \\
= & \sum_{j=\ell+1}^n w_j \boldsymbol{u}_{i_j}+\operatorname{sign}\left(w_{\ell}\right)\left|w_{\ell}\right| \boldsymbol{u}_{i_{\ell}}=\sum_{j=\ell}^n w_j \boldsymbol{u}_{i_j},
\end{aligned}
$$
where we use the definitions of $\Delta_{\ell}$ and $\lambda_{\ell}$.
\end{proof}

Moreover, for the binary treatment allocation problem, the value space $\mathcal{Z}$ of $\boldsymbol{z}$ is $\{0, 1\}^n$. 
For multi-treatment, the optimal $z_{ik^*}$ should satisfy:
\begin{equation}
z_{ik^*}=\mathbbm{1}\left(k^*=\underset{k}{\arg \max }~\tau_{i k}^r-\lambda_{ik} \tau_{i k}^c \right), 
\end{equation}
where $\mathbbm{1}$ is the 0/1 indicator function. Corresponding to Eq.~\eqref{eq:mckp}, we change the decomposition results in as a vector $\boldsymbol{\Delta}_{i}$ for the multi-choice problem, where $\boldsymbol{\Delta}_{i}=[\Delta_{i1}, \Delta_{i2}, \cdots, \Delta_{iK}]$ and $\Delta_{ik} \in\{-1,0,1\} \cap \sum_k \Delta_{ik} \in\{-1,0,1\}$. And the value space $\mathcal{Z}$ of $\boldsymbol{z_i}=[z_{i1}, z_{i2}, \cdots, z_{iK}]$ is $z_{ik} \in\{0,1\} \cap \sum_k z_{ik}=1$.



Moreover, inspired by the Expected Outcome Metric in \cite{zhao2017uplift}, we design the loss function for the differentiable budget allocation module as:
\begin{equation}
    \mathcal{L}_{\text{allocation}} = \frac{1}{N} \sum_{(x, t, y^c_i,y^r_i) \in \mathcal{D}} \mathcal{L}_{d}(t, z_i),
\end{equation}
where $\mathcal{L}_{d}(\cdot, \cdot)$ is the cross entropy loss, $\boldsymbol{z}_i$ is the predicted incentive recommendation for each user. Combing with Eq.~\eqref{eq:uplift}, the total loss for our E$^{3}$IR is:
\begin{equation}
    \mathcal{L}_{\text{E$^{3}$IR}} = \mathcal{L}_{\text{predict}}+ \beta\mathcal{L}_{\text{allocation}}.\label{eq:finalloss}
\end{equation}
where $\beta$ is the hyperparameter to control the trade-off.

    
    

\subsection{Discussion}
This section discusses the distinctions between our E$^3$IR and the online MCKP and offline MCKP problems~\cite{albert2022commerce,zhou2023direct}. In the online MCKP, the goal of optimization is to recommend a single incentive at each decision point, considering both the value and weight quantities. This online approach allows for dynamic adaptation of strategies. On the other hand, the offline MCKP employs the same underlying selection algorithm as the online MCKP but with items provided in advance. This allows the efficiency angle function to be fitted once before making recommendations. Subsequently, the algorithm determines which incentive to offer each customer without updating the efficiency angle function. 
In online marketing, the final objective of uplift prediction is to recommend incentives to customers. Therefore, optimizing the uplift model without incentive recommendation information may result in sub-optimal results due to the objective gap. Consequently, our E$^3$IR aims to provide incentive recommendation information for uplift prediction under budget constraints, aiming to mitigate the optimality gap between two-stage methods involving online or offline MCKP.

\section{Experiments}
In this section, we conduct experiments to answer the following research questions:
\begin{itemize}
    \item \textbf{RQ1:} How is the performance of our E$^3$IR compared with other baselines on both binary treatment and multi-treatment datasets?
    \item \textbf{RQ2:} How is the role that each part of our model plays?
    \item \textbf{RQ3:} How is the performance of our E$^3$IR  compared with other baselines when the budget changes on the multi-treatment datasets?
\end{itemize}

\subsection{Datasets}
$\bullet$ \textbf{Hillstrom}~\cite{diemert2018large}: This dataset is derived from an email merchandising campaign that involves 64,000 consumers who last purchased within twelve months. The features include recency, history\_segment, history, mens, womens, zip\_code, newbie, channel. The dataset contains three variables describing consumer activity in the following two weeks after email campaign delivery: visit, conversion, and spend. We select the spend as the response, the visit as the cost, and the segment as the treatment (3 types). Following \cite{betlei2021uplift}, we also construct two binary treatment datasets \textit{Hillstrom-Men} and \textit{Hillstrom-Women}, which combine the mens\_email treatment group and no\_email control group, the womens\_email treatment group and no\_email control group, respectively.

\noindent
$\bullet$ \textbf{Production}: This dataset was obtained from one of China's biggest short video platforms. For such platforms, video sharpening is known as a valuable source for studying user experience indicators. Different degrees of video clarity can significantly influence user experiences, potentially affecting users' playback time and bringing different bandwidth costs. To investigate this, we conducted random experiments over two weeks, assigning three levels of video sharpening ($T=1, 2, 3$) as treatment groups, while regular videos ($T=0$) served as the control group. We quantified the impact of different treatments by tracking users' short video playback time and bandwidth costs during this period. The resulting dataset comprises over 8 million users, with 108 features capturing user-related characteristics. We present the dataset collection visualization in Figure~\ref{fig:product}.
Moreover, the data statistics are shown in the appendix.

\begin{figure}[htbp]
    \centering
    \includegraphics[width=\linewidth]{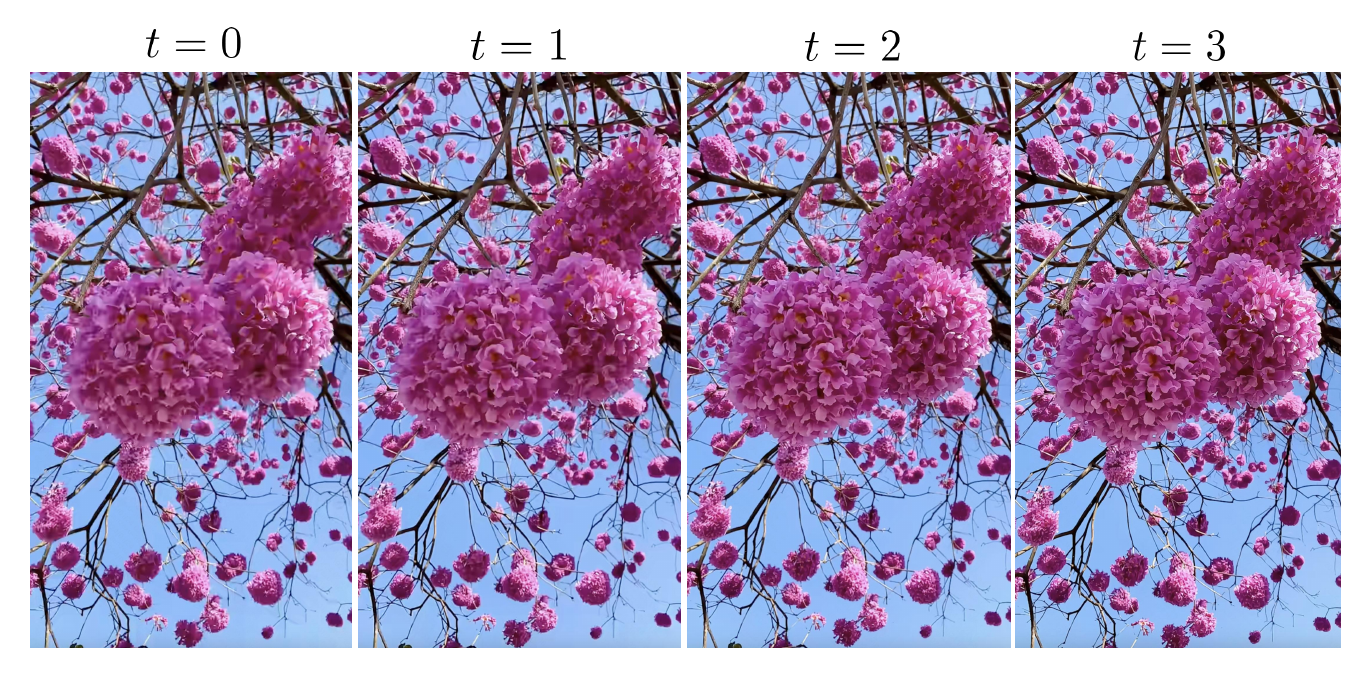}
    \caption{The visualization of the Production dataset collection subject to different treatments. As $t$ increases, the clarity of the video correspondingly enhances.}
    \Description[<Figure 3. Fully described in the text.>]{<A full description of Figure 3 can be found in Section 5.1.>}
    \label{fig:product}
\end{figure}

\subsection{Baselines and Metrics}
\subsubsection{Baselines}
We compare our E$^3$IR with the following baselines:
\begin{itemize}
    \item Cost-unaware binary incentive recommendation problem: \textbf{S-Learner}~\cite{kunzel2019metalearners}, \textbf{X-Learner}~\cite{kunzel2019metalearners}, \textbf{Causal Forest}~\cite{davis2017using}, \textbf{CFRNet}~\cite{shalit2017estimating},  \textbf{DragonNet}~\cite{shi2019adapting}, \textbf{EUEN}~\cite{ke2021addressing}.
    \item Cost-aware binary incentive recommendation problem: \textbf{TPM-SL}~\cite{zhao2019unified}, \textbf{Direct Rank}~\cite{du2019improve}, \textbf{DRP}~\cite{zhou2023direct}.
    \item Cost-aware multi-incentive recommendation problem: \textbf{Multi-TPM-SL}~\cite{ai2022lbcf}, \textbf{DPM}~\cite{zhou2023direct}.
\end{itemize}

\subsubsection{Metrics}
We evaluate our model with the following metrics: \textsc{AUUC (Area under Uplift Curve)}~\cite{rzepakowski2010decision}, \textsc{QINI (Qini Coefficient)}~\cite{mouloud2020adapting}, \textsc{KENDALL (Kendall's Rank Correlation)}~\cite{mouloud2020adapting}, \textsc{AUCC (Area under Cost Curve)}~\cite{du2019improve}, \textsc{MT-AUCC (Muti-Treatment AUCC)}~\cite{zhou2023direct}, \textsc{EOM (Expected Outcome Metric)}~\cite{zhao2017uplift}.

Due to the space limitation, we present the details of the baselines and metrics in the Appendix.

\subsection{Implementation Details}
We implement all baselines and our E$^3$IR based on Pytorch 1.10 and Jax 0.4.23, with Adam as the optimizer and a maximum iteration count of 30. We use the QINI as a reference to search for the best hyper-parameters for all baselines and our model. We also adopt an early stopping mechanism with a patience of 5 to avoid over-fitting the training set. Furthermore, we utilize the hyper-parameter search library Optuna~\cite{akiba2019optuna} to accelerate the tuning process. All experiments are implemented on NVIDIA A40 and Intel(R) Xeon(R) 5318Y Gold CPU @ 2.10GHz.

\subsection{Overall Performance (RQ1)}

For the uplift prediction and budget allocation on the binary treatment datasets (\textit{i.e.}, Hillstorm-Men, Hillstorm-Women), we present the results in Table~\ref{tab:overall1}. 
From the results, we have the following observations:
1) Meta-learners (\textit{i.e.}, the S-Learner and X-Learner) appear to perform competitively with more advanced deep learning approaches, particularly in KENDALL, which is a measure of the model's ability to rank users by the predicted uplift.
Causal Forest have a better performance than CFRNet$_{wass}$ across all the metrics, which shows the effectiveness of the ensemble structure and the splitting criterion.
2) DragonNet performs best among all the cost-unaware baselines, this may because of the design of the target regularization, which can reduce the predictive error of uplift.
In total, the representation learning-based methods have a higher KENDALL than other baselines, which suggests that representation learning may have potential advantages in capturing ranking consistency on the uplift prediction. 
3) Our E$^3$IR exhibits superior performance on both datasets across all the three metrics, this shows that the monotonic and smoothness constraints of the user response curve in online marketing can improve the performance of the ranking metrics in uplift modeling. 
Moreover, the differentiable allocation module can help train the uplift prediction module to be more effective from the decision making perspective.

From the results of cost-aware binary treatment baselines, we set the budgets as 400 in both two datasets, and we have the following observations: 
1) TPM-SL performs similar to the S-Learner on the basic uplift metrics (\textit{i.e.}, AUUC, QINI and KENDALL), because the TPM-SL uses the same model structure as the S-Learner for uplift prediction. 
However, this method shows the worst performance on AUCC among all cost-aware binary treatment methods; this may be because using the S-Learner as the estimator of response and cost predictors will introduce a bigger prediction error than other methods, which leads to a bad AUCC.
2) Direct Rank and DRP performs better than most cost-unaware baselines, this is because that the two methods design the learning objective related to the cost, with the cost information, the model can get more accurate rank of ROI, which is benefit for the uplift prediction. 
DRP performs better than Direct Rank on AUCC, which leverages a factor model for mitigating the performance gap between the prediction and optimization. This is helpful for getting better budget allocation metrics. 
3) Our E$^3$IR shows superior performance across diverse metrics and datasets, especially on QINI. This is due to that we use the QINI as the objective to tune the hyperparameters of all the baselines and our E$^3$IR.
Compared to the cost-aware baselines, the superior performance indicates that the differentiable budget allocation module can help our E$^3$IR get better results on the ROI ranking. 
The results also show that the end-to-end training of our E$^3$IR can reduce the performance gap between the prediction and optimization.

Considering the multi-treatment scenarios, we evaluate the performance of the budget allocation on the Hillstorm and Production datasets with the budget as 500 and 50 thousand, respectively.  
We present the results in Table~\ref{tab:overall2}, from the results, we have the following observations:
1) DRM achieves the suboptimal performance on most results of the two datasets, this may because the design of decision factor module. 
Compared with solving the ILP problem by the bisection methods, the decision factor can transport the optimization information into the uplift prediction, and this can partly remove the performance gap in the two-stage method Multi-TPM-SL.  
2) Our E$^3$IR achieves superior performance across the MT-AUCC and EOM in the two datasets. Expected the public dataset Hillstorm, the results of the Production dataset further verify the effectiveness of our designed structure. 
The consistency of E$^3$IR’s performance accentuates its adaptability and potential for generalization across diverse experimental conditions.

\begin{table*}[htbp]
    \centering
    \caption{Overall comparison between our models and the baselines on Hillstorm Men and Hillstorm Women datasets. We report the results over five random seeds. The best and second best results are \textbf{bold} and \underline{underlined}, respectively.} 
\resizebox{\linewidth}{!}{
    \begin{tabular}{c|ccccccccc}\toprule
      \multirow{2}{*}{\textbf{Methods}}   &\multicolumn{4}{c}{Hillstorm Men} & &\multicolumn{4}{c}{Hillstorm Women} \\\cline{2-5} \cline{7-10}
         & AUUC & QINI & KENDALL & AUCC& & AUUC & QINI & KENDALL& AUCC\\\midrule
S-Learner &0.4729 $\pm$ 0.0377 & 0.0206 $\pm$ 0.0110  &0.5271 $\pm$ 0.0103 & --- & &0.4804 $\pm$ 0.0253 &0.0271 $\pm$ 0.0202 & 0.4809 $\pm$ 0.0197 & --- \\
X-Learner &0.4557 $\pm$ 0.0245& 0.0473 $\pm$ 0.0227 &0.5324 $\pm$ 0.0402 & --- & & 0.4513 $\pm$ 0.0056 &0.0421 $\pm$ 0.0332 & 0.5723 $\pm$ 0.0288&  --- \\
Causal Forest &0.5023 $\pm$ 0.0219 & 0.0487 $\pm$ 0.0165  &0.5689 $\pm$ 0.0402 & --- & &0.5122 $\pm$ 0.0104 & 0.0523 $\pm$ 0.0211 & 0.5504 $\pm$ 0.0301 & --- \\
CFRNet$_{mmd}$ &0.5228 $\pm$ 0.0306  &0.0496 $\pm$ 0.0232 &0.5451 $\pm$ 0.0243& --- & &0.5189 $\pm$ 0.0349 &0.0519 $\pm$ 0.0251& 0.5244 $\pm$ 0.0342 & --- \\
CFRNet$_{wass}$ &0.4998 $\pm$ 0.0442  & 0.0464 $\pm$ 0.0266 &0.5199 $\pm$ 0.0303& --- & &0.5097 $\pm$ 0.0369 &0.0477 $\pm$ 0.0198& 0.5308 $\pm$ 0.0314& --- \\
DragonNet &0.5640 $\pm$ 0.0392  & 0.0670 $\pm$ 0.0298 &0.6804 $\pm$ 0.0381 & --- & &0.5921 $\pm$ 0.0289 &0.0623 $\pm$ 0.0277 &0.6597 $\pm$ 0.0392 & ---  \\
EUEN &0.5723 $\pm$ 0.0122&0.0603 $\pm$ 0.0265 &0.5408 $\pm$ 0.0385 & --- & &0.5414 $\pm$ 0.0151 &0.0654 $\pm$ 0.0288& 0.5202 $\pm$ 0.0401 & ---  \\
\hline
TPM-SL & 0.4665 $\pm$ 0.0188& 0.0315 $\pm$ 0.0249 &0.5902 $\pm$ 0.0351 & 0.0476 $\pm$ 0.0056 & &0.4725 $\pm$ 0.0211 &0.0310 $\pm$ 0.0256& 0.5108 $\pm$ 0.0399 & 0.0502 $\pm$ 0.0064  \\
Direct Rank &\underline{0.5810} $\pm$ 0.0198&\underline{0.0683} $\pm$ 0.0295 &0.6406 $\pm$ 0.0299 & 0.0498 $\pm$ 0.0047 & &0.6022 $\pm$ 0.0251 &\underline{0.0671} $\pm$ 0.0265& 0.6441 $\pm$ 0.0297& 0.0519 $\pm$ 0.0048  \\
DRP &0.5783 $\pm$ 0.0252&0.0667 $\pm$ 0.0210 & \underline{0.6811} $\pm$ 0.0275 & \textbf{0.0545} $\pm$ 0.0044 & &\underline{0.6211} $\pm$ 0.0298 &0.0669 $\pm$ 0.0214& \textbf{0.6802} $\pm$ 0.0357& \underline{0.0531} $\pm$ 0.0043   \\
\hline
\rowcolor{mygray} \textbf{E$^3$IR} & \textbf{0.5928} $\pm$ 0.0193& \textbf{0.0717} $\pm$ 0.0196&\textbf{0.7033} $\pm$ 0.0353 & \underline{0.0502} $\pm$ 0.0069 &  & \textbf{0.6466} $\pm$ 0.0267 &\textbf{0.0760} $\pm$ 0.0187&\underline{0.67f54} $\pm$ 0.0322 & \textbf{0.0587} $\pm$ 0.0027
\\\bottomrule
    \end{tabular}}
    \label{tab:overall1}
\end{table*}

\begin{table*}[htbp]
    \centering
    \caption{Overall comparison between our models and the baselines on Hillstorm and Production datasets. The EOM is represented by employing the min-max normalized responses. Moreover, the EOM of the Production dataset is scaled down by a factor of $e^4$. We report the results over five random seeds. The best and second best results are \textbf{bold} and \underline{underlined}, respectively.} 
\begin{tabular}{c|ccccccc}\toprule
      \multirow{2}{*}{\textbf{Methods}}   &\multicolumn{2}{c}{Hillstorm} & &\multicolumn{2}{c}{Production} \\\cline{2-3} \cline{5-6}
         & MT-AUUC & EOM & & MT-AUUC & EOM \\\midrule
Multi-TPM-SL &0.0645 $\pm$ 0.0078 & 21.9767 $\pm$ 0.1088  & & \underline{0.3907} $\pm$ 0.0196 &27.6576 $\pm$ 0.1079   \\
DRM &\underline{0.0726} $\pm$ 0.0021& \underline{26.5744} $\pm$ 0.1102  & & 0.3601 $\pm$ 0.0258 & \underline{36.6544} $\pm$ 0.1099  \\\hline
\rowcolor{mygray} \textbf{E$^3$IR} &\textbf{0.0803} $\pm$ 0.0044 & \textbf{29.9221} $\pm$ 0.1698  & &\textbf{0.4639} $\pm$ 0.0311 
& \textbf{44.5087} $\pm$ 0.1877\\\bottomrule
    \end{tabular}
    \label{tab:overall2}
\end{table*}

\subsection{Ablation Study (RQ2)}
To analyze the role played by each proposed module, we construct the ablation study on both the binary and multi-treatment datasets. 
Specifically, we sequentially remove each module of our E$^3$IR, \textit{i.e}. the monotonic constraint in the uplift prediction module (E$^3$IR w/o MC), the smooth constraint in the uplift prediction module (E$^3$IR w/o SC), the differentiable allocation module (E$^3$IR w/o DA). 
We present the results in Figure~\ref{fig:ablation}. 
From the results, we can see that removing any module will cause performance degradation across all the datasets and metrics. This verifies the validity of each module design in our E$^3$IR. That is, the monotonic constraint and the smooth constraint can help the model get a better ranking performance of the predicted uplift by injecting the marketing domain knowledge into the structure design. 
The differentiable allocation module can build the bridge between the prediction and optimization, which can bring a big performance improvement compared with other parts in our E$^3$IR.

\begin{figure}[htbp]
    \centering
    \subfigure[Binary treatment]{\includegraphics[width=0.48\linewidth]{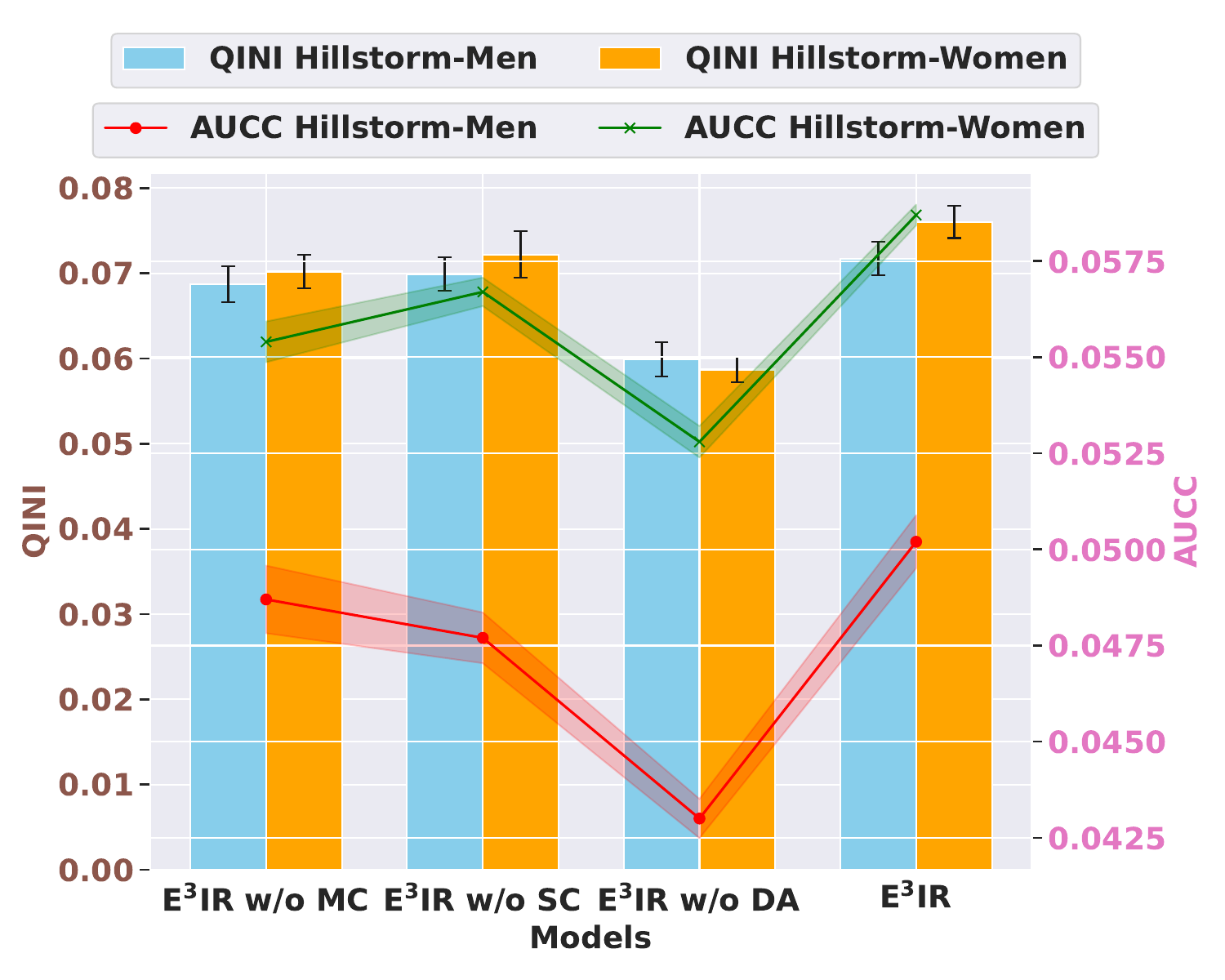}}~
    \subfigure[Multi-treatment]{\includegraphics[width=0.48\linewidth]{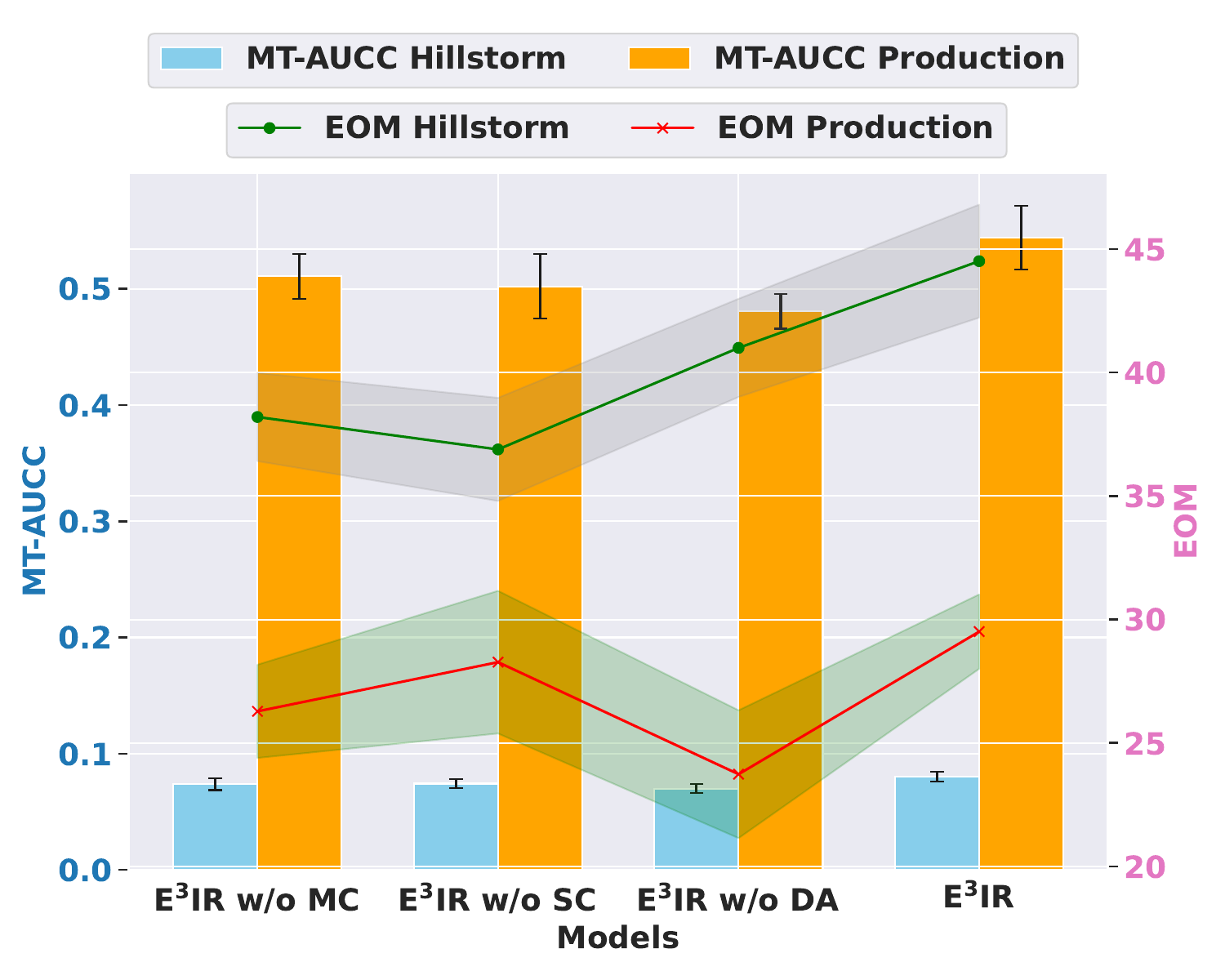}}
    \caption{Ablation study of our E$^3$IR on all the binary treatment and multi-treatment datasets.}
    \Description[<Figure 4. Fully described in the text.>]{<A full description of Figure 4 can be found in Section 5.5.>}
    \label{fig:ablation}
\end{figure}

\subsection{Analysis of Budget Influence (RQ3)}
After we predict the uplifts $\boldsymbol{\tau}^r$ and $\boldsymbol{\tau}^c$, we can evaluate the influence of different budgets. Following the formulation in \cite{zhou2023direct}, we draw the curve and use incremental cost as the X-axis and incremental response as the Y-axis. 
Similar to the uplift curve, if a good model generates the score, then the curve should be above the benchmark line. This means a better model would select samples to achieve higher incremental value for the same incremental cost level. The results are shown in Figure~\ref{fig:cost_curve}.
Corresponding to the MT-AUUC results in Table~\ref{tab:overall2}, our E$^3$IR has the best performance of the cost curve on both multi-treatment datasets.
Moreover, we also test the EOM of different approaches when given different budgets; we show the results in Figure~\ref{fig:eom}. 
It is still clear that our E$^3$IR can always help the platform obtain much more profits under different budgets.
\begin{figure}[htbp]
    \centering
    \subfigure[Hillstorm]{\includegraphics[width=0.48\linewidth]{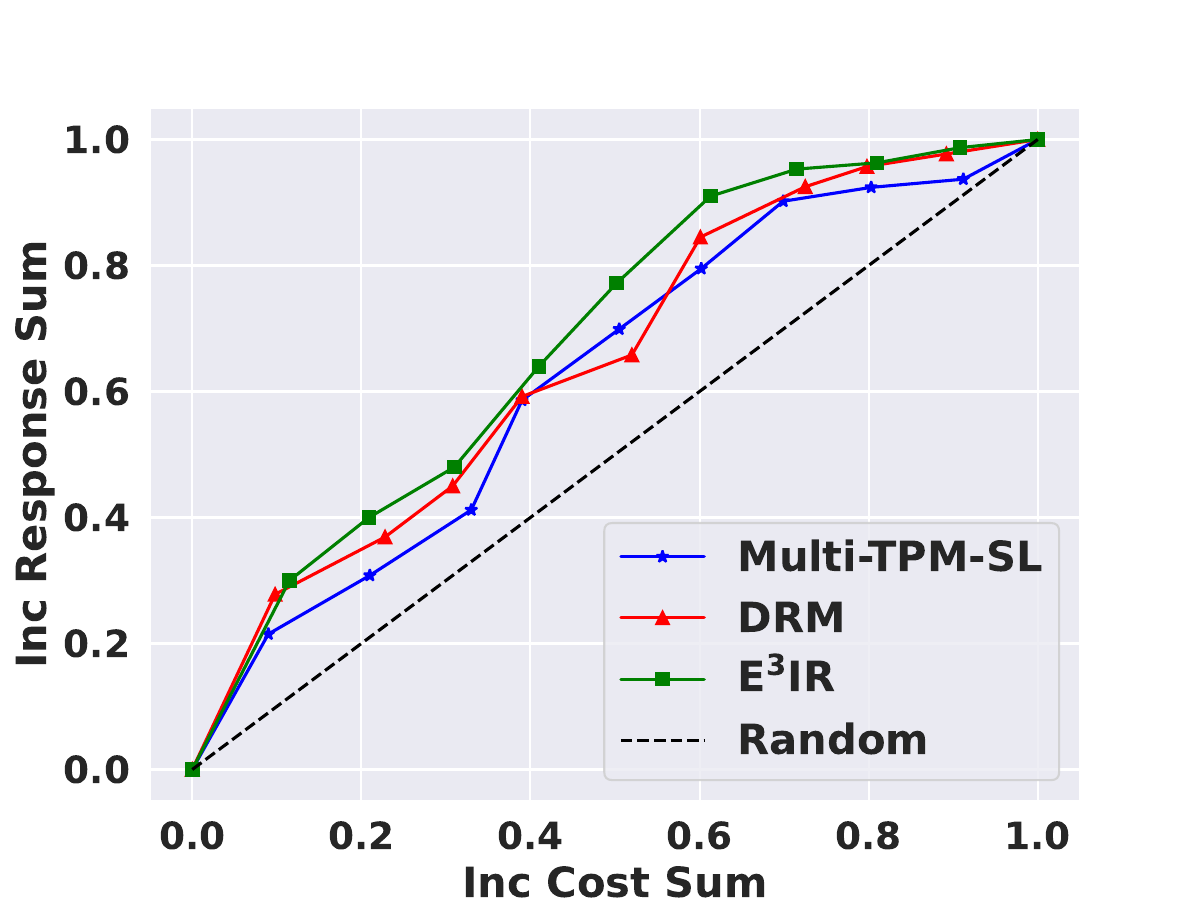}}\quad
    \subfigure[Production]{\includegraphics[width=0.48\linewidth]{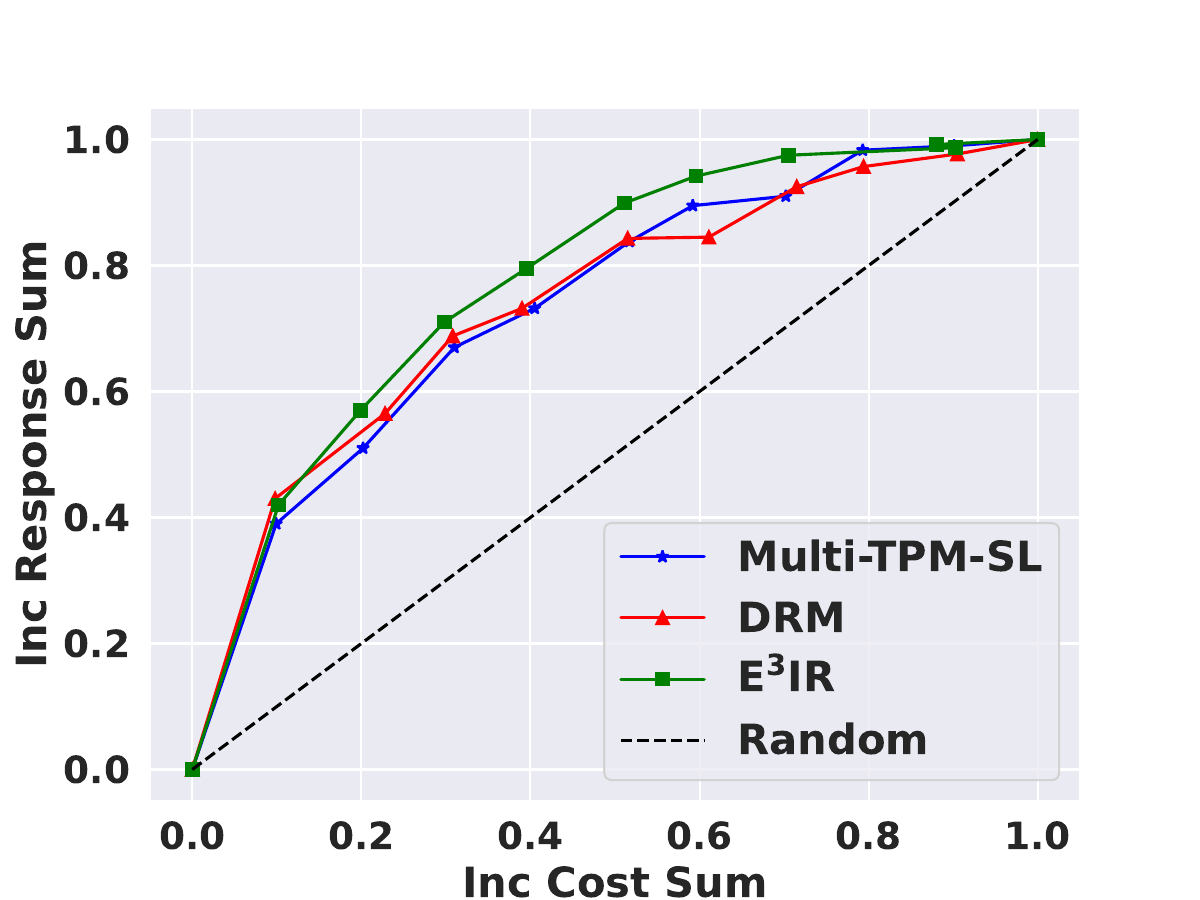}}
    \caption{Cost curve comparison between the cost-aware baselines and our E$^3$IR on the multi-treatment datasets.}
    \Description[<Figure 5. Fully described in the text.>]{<A full description of Figure 5 can be found in Section 5.6.>}
    \label{fig:cost_curve}
\end{figure}

\begin{figure}[htbp]
    \centering
    \subfigure[Hillstorm]{\includegraphics[width=0.48\linewidth]{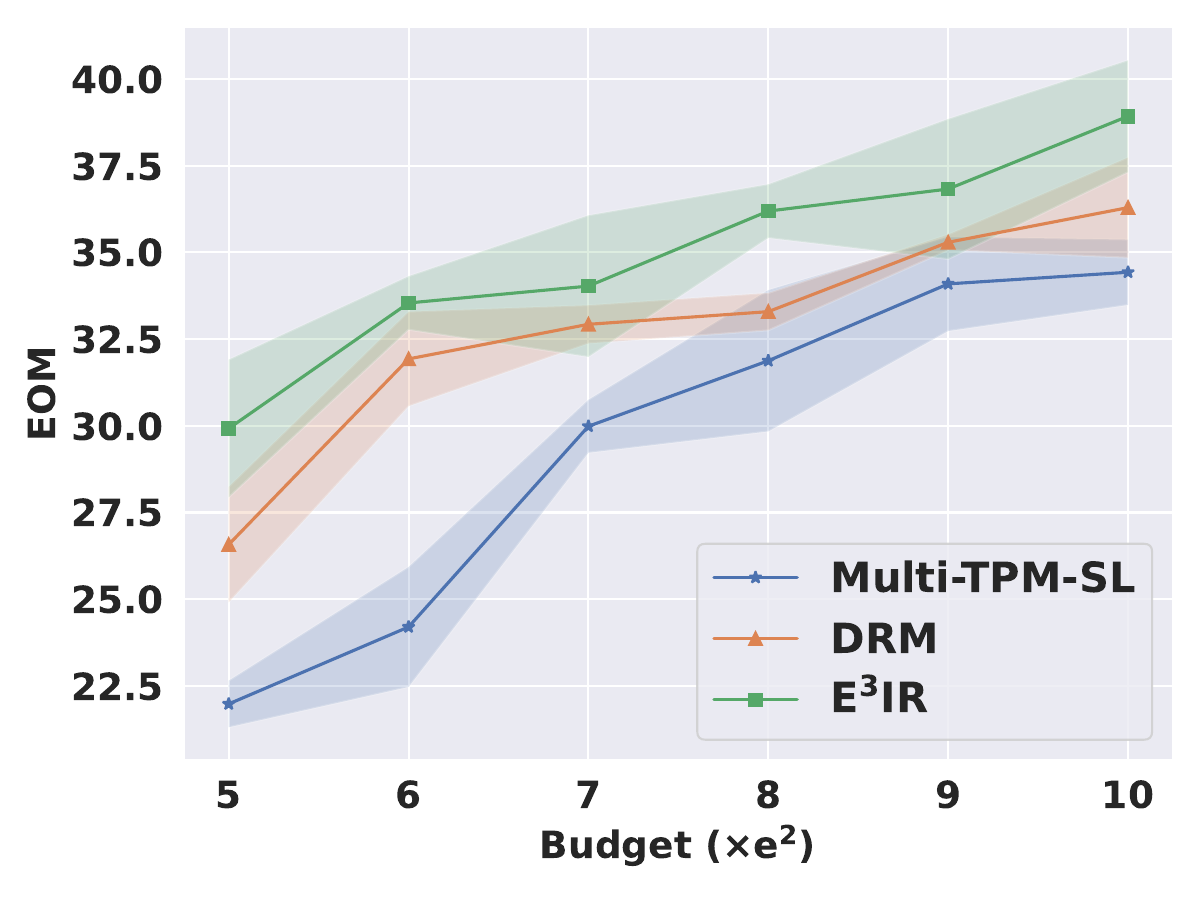}}~~
    \subfigure[Production]{\includegraphics[width=0.48\linewidth]{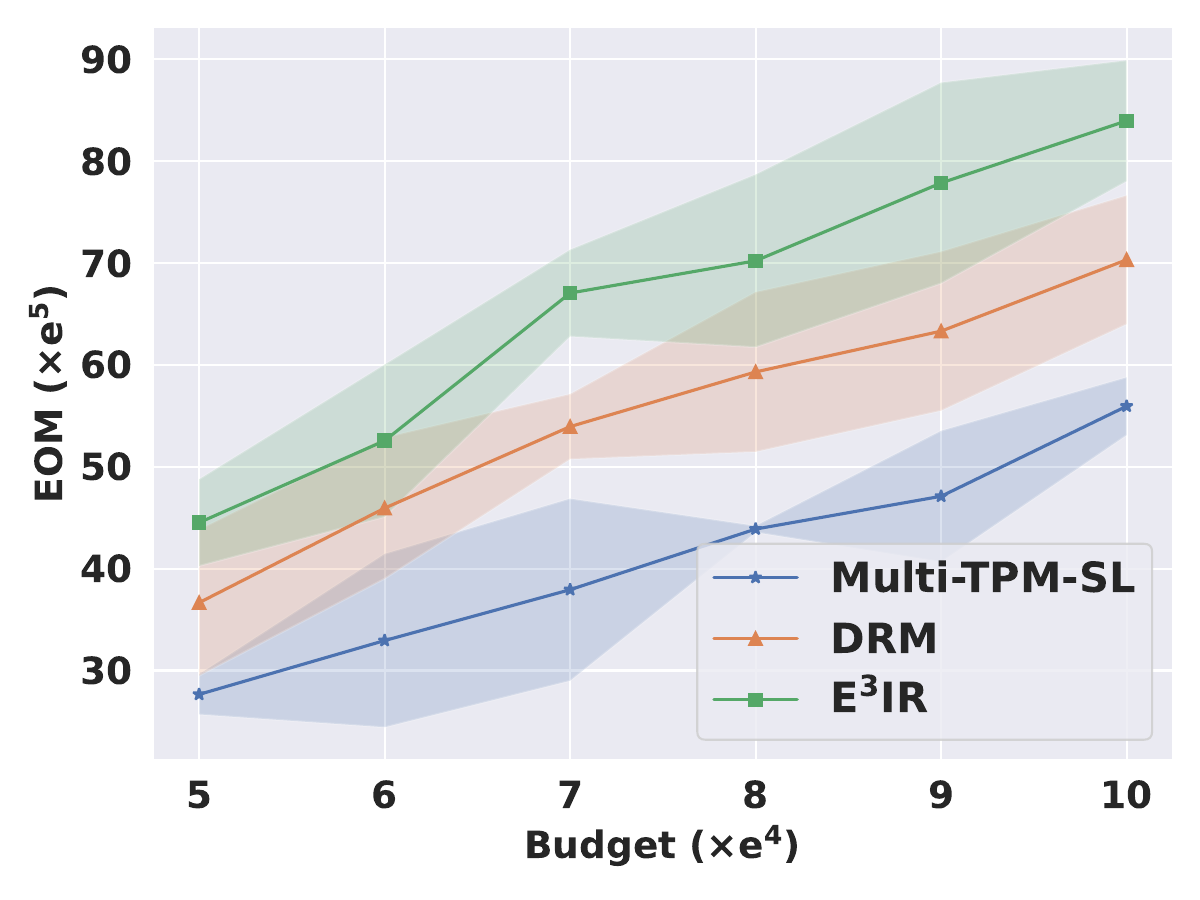}}
    \caption{EOM comparison with the budget changes between the cost-aware baselines and our E$^3$IR on the multi-treatment datasets.}
    \Description[<Figure 6. Fully described in the text.>]{<A full description of Figure 6 can be found in Section 5.6.>}
    \label{fig:eom}
\end{figure}
\section{Conclusion}
In this paper, we formulate online marketing with a specific budget constraint as a ``predict + optimize'' problem. To solve it, we propose an end-to-end optimization method E$^{3}$IR, which consists of two customized modules.  Firstly, we incorporate marketing domain knowledge into the uplift prediction module, enabling the acquisition of a monotonic and smooth user response curve. Additionally, we utilize the differentiable optimization of the ILP problem to reduce the performance gap in two-stage methods. Extensive experiments conducted on binary treatment and multi-treatment datasets demonstrate the superiority of our method across various metrics. Future work will concentrate on reducing the time complexity of the differentiable allocation module and applying this approach to complex real-world marketing scenarios.

\begin{acks}
We thank the support of the National Natural Science Foundation of China (No.62302310).
\end{acks}
\bibliographystyle{ACM-Reference-Format}
\bibliography{sample-base}

\end{document}